\newcommand{\blind}{0}
\theoremstyle{plain}
\newtheorem{assump}{Assumption}
\newcolumntype{P}[1]{>{\centering\arraybackslash}p{#1}}
\theoremstyle{remark}
\newtheorem*{remark}{Remark}
\theoremstyle{plain}
\newtheorem{theorem}{Theorem}
\theoremstyle{plain}
\newtheorem{corollary}{Corollary}[theorem]
\begin{document}

\def\spacingset#1{\renewcommand{\baselinestretch}%
{#1}\small\normalsize} \spacingset{1}

\if0\blind
{
  \title{\bf Adaptive Ridge-Penalized Functional Local Linear Regression}
  \author{Wentian Huang\hspace{.2cm}\\
    Department of Statistics and Data Science, Cornell University\\
    and \\
    David Ruppert \\
    Department of Statistics and Data Science, Cornell University}
  \maketitle
} \fi

\bigskip
\begin{abstract}
We introduce an original method of multidimensional ridge penalization in functional local linear regressions. The nonparametric regression of functional data is extended from its multivariate counterpart, and is known to be sensitive to the choice of $J$, where $J$ is the dimension of the projection subspace of the data. Under multivariate setting, a roughness penalty is helpful for variance reduction. However, among the limited works covering roughness penalty under the functional setting, most only use a single scalar for tuning. Our new approach proposes a class of data-adaptive ridge penalties, meaning that the model automatically adjusts the structure of the penalty according to the data sets. This structure has $J$ free parameters and enables a quadratic programming search for optimal tuning parameters that  minimize the estimated mean squared error (MSE) of prediction, and is capable of applying different roughness penalty levels to each of the $J$ basis.

The strength of the method in prediction accuracy and variance reduction with finite data is demonstrated through multiple simulation scenarios and two real-data examples. Its asymptotic performance is proved and compared to the unpenalized functional local linear regressions.
\end{abstract}

\noindent%
{\it Keywords:}
Asymptotic theory, data-adaptive model, functional data, local linear regression, nonparametric model, ridge penalty
\vfill

\newpage
\spacingset{1.75}

\section{Introduction} \label{section: introduction}
Functional data analysis has received increasing attention during the past few decades with applications in a variety of fields such as chemometrics, medicine, and environmental science. This article focuses on scalar-on-function regression where an unknown function, $m$, describes the relationship between a predictor function $X$ in some Hilbert space and a real scalar $Y$.  The model is $Y = m(X) + \epsilon$ where $\epsilon$ is random error.   We assume an independent, identically distributed sample $(X_i,Y_i)$, $i = 1,\dots,n$.

Past work such as Cai et al.\ (2006 \cite{cai2006}) and Reiss and Ogden (2007 \cite{reiss2007}) discussed estimation when $m$ is linear, so that $m(X) = \langle X, \beta\rangle$, the inner product of $X$ and an unknown coefficient function $\beta$.   However,  the linearity assumption often fails to hold. For instance, in Section \ref{sec:realdata} we plot the estimated derivatives of $m$ at each observed function, $X_i$, in two real data sets (See Fig.\ \ref{fig:truck_sample} \& Fig.\ \ref{fig:corn_sample}). Linearity implies that the derivative of $m$ at $X$ is equal to $\beta$ at all $X$.  Variation of the derivative's shape as $X_i$ varies shows the nonlinearity of $m$ in these examples. 

Nonparameteric methods that have been widely used in multivariate regression have been extended to functional predictors and have shown strong performance there. For example, Ferraty et al.\ (2007 \cite{ferraty2007}) applied the well known Nadaraya-Watson kernel estimator to regression with functional predictors.  For regression with a scalar or low-dimensional covariate, local polynomial regression has advantages over kernel regression, e.g., better behavior near the boundary of the covariate space \citep{fan1992design,ruppert1994}.  Therefore, it is natural to study local polynomial functional regression.

Ba\'illo and Gran\'e (2009 \cite{baillo2009}) first extended the multivariate local linear regression estimator of Ruppert and Wand (1994 \cite{ruppert1994}), where $X$ is finite-dimensional, to functional data. Boj et al.\ (2010 \cite{boj2010distance}) and Barrientos-Marin et al.\ (2010 \cite{barrientos2010locally}), among many others, also studied local linear regression with functional predictors. Ferraty and Nagy (2019 \cite{ferraty2019}) discussed in detail the implementation of functional local linear regression (FLLR) and its asymptotic behavior, while Berlinet et al.\ (2011 \cite{berlinet2011}) explored the model from a purely theoretical perspective.  As in multivariate regression,  FLLR often has better prediction accuracy than kernel estimators. 

Because of the so-called curse-of-dimensionality, nonparametric estimators such as local linear regression can be problematic in high dimensional spaces.  Function spaces are infinite dimensional, so local polynomial regression might seem unsuitable for functional regression.  Fortunately, functional data often lie in a low-dimensional subspace, e.g., in the space spanned by the first few principal component directions.  Therefore, to implement local polynomial functional regression, one can project the data onto a subspace of dimension~$J$, e.g., the first $J$ principal components, where $J$  is a tuning parameter.  However, the estimator can be sensitive to the choice of $J$ and, even with the best choice of $J$, the estimator will likely be improved with a roughness penalty.

To improve the FLLR estimator, we propose a data-adaptive ridge roughness penalty.
The most general ridge penalty matrix is a $J \times J$ positive semidefinite matrix.  Data-based selection of this type of penalty matrix with $J(J+1)/2$ free parameters can be difficult and can result in an unstable and inefficient estimator.
Therefore, we propose a data-adaptive ridge penalization that utilizes a specific class of positive semidefinite diagonalizable matrices. As will be
 shown later, this structure with only $J$ free parameters enables a quadratic programming search for optimal tuning parameters that minimize the estimated mean squared error (MSE) of prediction. Our method of penalization also accommodates a different roughness penalty level on each basis function and avoids the computational cost of multivariate cross validation as $J$ increases. 

 Reiss and Ogden (2007 \cite{reiss2007})  suggested a univariate roughness penalty for functional linear models.  Reiss et al.\ (2017 \cite{reiss2017}) explored adding a fixed univariate ridge penalty onto nonparametric functional estimators with smoothing splines. Both papers select a single smoothing parameter by generalized cross validation or restricted maximum likelihood (REML) estimation of variances, and neither discussed the estimators' asymptotic behaviors.  As far as we know, there is no previous work investigating multidimensional ridge penalties in functional  nonparametric regression. Our estimator has strong prediction performance in both simulations and real data examples, especially when the model is nonlinear. In addition, the method shows effective bandwidth size control for finite data samples, proving its strength in variance reduction. Asymptotic properties of the new estimator are derived, and a detailed implementation is provided, including a two-step bandwidth selection for estimating $m$ and its  functional derivative $m'$.

In Section \ref{section: model}, we introduce our model and the design of the ridge penalty. In Section \ref{sec: estimation}, we estimate the mean square error (MSE) of our estimator and discuss its asymptotic estimator behavior. Section \ref{sec:sim} provides a detailed description of the implementation of the estimator and includes a comprehensive simulation study to compare the performance of multiple nonparametric methods under different linearity levels of $m$. Section \ref{sec:realdata} uses two real datasets to examine the performance of our method. In the end, we discuss potential future work. Additional results and detailed proofs can be found in the supplementary materials.

\section{Methodology} \label{section: model}
\subsection{Functional Local Linear Regression} \label{sec:fllr}
We consider a pair of variables $(X, Y) \in \mathcal{L}^2(\mathcal{T}) \times \mathbb{R}$, which means $X$ is a square integrable random function over a compact interval $\mathcal{T}$, and $Y$ is real valued. Suppose there exists a regression model $Y = m\left(X\right) + \epsilon$, where $m:  \mathcal{L}^2(\mathcal{T}) \to \mathbb{R}$ is first order differentiable, and $\epsilon \sim N\left(0, \sigma_e^2\right)$. In this article we are interested in the estimation of $E\left(Y|x\right)=m(x)$ at a point $x$, using $n$ i.i.d.\ samples $(X_i, Y_i)$ collected from the joint distribution $(X, Y)$. 

As discussed in the introduction, we use functional local linear regression (FLLR) to estimate $m(x)$.  However, FLLR estimates not only $m(x)$ but also its first derivative, $m'_x:  \mathcal{L}^2(\mathcal{T}) \rightarrow \Re$.  Although FLLR has become a well-studied technique for nonparametric functional regression, there has been relatively little research on regularizing the high-dimensional estimate of $m'_x$. For this purpose, we suggest a new FLLR model with data-adaptive ridge penalization, which we denote as FLLR-r.  As will be seen in Section~\ref{sec:sim}, regularization of $\hat m'_x$ also improves $\hat m(x)$.

Below are several assumptions needed for the FLLR-r estimator:

\begin{assump} \label{A1}
The continuously differentiable function $K: \mathbb{R} \to \mathbb{R}^+$ is a kernel of Type I, whose definition can be found in, for example, Ferraty and Vieu (2006 \cite{ferraty2006nonparametric}): $\int K=1$, and $c_K \mathbbm{1}_{[0, 1]} \le K \le C_K \mathbbm{1}_{[0, 1]}$ where $c_K, C_K >0$;
\end{assump}

\begin{assump} \label{A2}
$\forall h > 0, \psi_x(h)=P(\|X_i -x\| < h) > 0$. Also, as $n \to \infty$, $h = h_n \to 0$, $n\psi_x(h) \to \infty$, $J = J\left(n\right) \to \infty$. 
\end{assump}

\begin{assump} \label{A3}
For $x \in \mathcal{L}^2 \left(\mathcal{T}\right)$, the model $m: \mathcal{L}^2 (\mathcal{T}) \to \mathbb{R}$ is first and second order differentiable at its neighborhood $\mathcal{N}_x$, with the corresponding bounded derivative functional $m'_x: \mathcal{L}^2 (\mathcal{T}) \to \mathbb{R}$ and $m''_x: \mathcal{L}^2 (\mathcal{T}) \times \mathcal{L}^2 (\mathcal{T}) \to \mathbb{R}$. Also, for all $u \in \mathcal{L}^2 (\mathcal{T})$ and $x+u \in \mathcal{N}_x$, there is $0 <\rho<1$ and $r = x+\rho u$ s.t.\ 
\begin{equation*}
    m(x+u) = m(x) + m'_x(u) + \dfrac{1}{2} m''_r(u^2)
\end{equation*}
\end{assump}
Assumption \ref{A3} is an application of Taylor's Theorem in function spaces (Zeidler, 1995 \cite{zeidler1995}), and is similar to Assumption H1 in Ferraty and Nagy (2019 \cite{ferraty2019}). 

Let $\phi_1, \phi_2, \ldots \in \mathcal{L}^2(\mathcal{T})$ be a set of orthogonal basis functions, $X = \sum_{j=1}^{\infty} \langle X, \phi_j \rangle \phi_j$, $c_{ij} = \langle X_i-x, \phi_j \rangle$, and $X_i-x=\sum_{j=1}^{\infty} c_{ij} \phi_j$. With a first-order Taylor expansion at $x$, we have
\begin{align} \label{eq: taylor}
E(Y_i|X_i)=m(X_i) & =m(x)+m'_x\left(X_i-x\right)+o_p\left(\|X_i-x\|\right).
\end{align}
Eq.~(\ref{eq: taylor}) can therefore be estimated using a basis truncated at $J$: 
\begin{equation*}
m(X_i) \approx m(x) + \sum_{j=1}^{\infty} c_{ij} m'_x\left(\phi_j\right) \approx m(x)+\mathbf{c}_i^T\mathbf{m}'_{x, J},    
\end{equation*}
where $\mathbf{c}_i=(c_{i1}, \ldots, c_{iJ})^T$ and $\mathbf{m}'_{x, J}=\left \{m'_x(\phi_1), \ldots, m'_x(\phi_J)\right\}^T$. Define $\boldsymbol{\beta}=(\beta_1, \ldots, \beta_J)^T=\mathbf{m}'_{x, J}$, so $\boldsymbol{\beta}^T \boldsymbol{\Phi}$ is the derivative functional $m'_x$ projected on the subspace spanned by $\boldsymbol{\Phi} = \{ \phi_1, \ldots, \phi_J\}^T$.
In addition, $\mathbf{C}$ is an $n \times J$ matrix with rows $\mathbf{c}_i^T$, and $\mathbf{Y}$ is the vector of responses $Y_i$. FLLR then estimates $m(x)$ and $\boldsymbol{\beta}$ by minimizing the weighted sum of squared errors $\sum_{i=1}^n \left(Y_i - m(x) - \mathbf{c}_i^T\boldsymbol{\beta}\right)^2 \Delta_i$,
with the kernel weights $\Delta_i=\dfrac{K\left(\|X_i-x\|/h\right)}{E\left\{K\left(\|X_i-x\|/h\right)\right\}}$.

\subsection{Ridge Penalty in FLLR} \label{sec:ridge}
By its nature, an FLLR model is characterized by the truncated basis count $J$ and the bandwidth $h$, which in practice are usually determined by cross validation. As mentioned earlier, we'd like to introduce a multidimensional ridge penalty into the functional regression. The new method constructs the penalty using a data-adaptive basis learnt from sample functions, and enables a parameter selection that minimizes the finite sample estimation error of $m(x)$.

Let $\mathbf{H}^*$ be a $J \times J $ positive semidefinite penalty matrix, and $\mathbf{H}=\begin{pmatrix} 0 & \mathbf{0}^T \\ \mathbf{0} & \mathbf{H}^*\end{pmatrix}$. Optimal estimates of $m(x)$ and $\boldsymbol{\beta}$ satisfy the following:
\begin{equation} \label{eq: ridgerule}
\hat{m}(x), \hat{\boldsymbol{\beta}} = \text{argmin}_{m(x), \boldsymbol{\beta}} \left \{\sum_{i=1}^n \left(Y_i - m(x) - \mathbf{c}_i^T\boldsymbol{\beta}\right)^2 \Delta_i + \boldsymbol{\beta}^T \mathbf{H}^* \boldsymbol{\beta} \right \}.
\end{equation}
Thus, 
\begin{equation} \label{eq: mhat}
\hat{m}(x)=\mathbf{e}_1^T \left(\dfrac{1}{n}\mathbf{C}_x^T \mathbf{\Delta} \mathbf{C}_x+\mathbf{H}\right)^{-1}\dfrac{1}{n}\mathbf{C}_x^T\mathbf{\Delta} \mathbf{Y},
\end{equation}
where $\mathbf{C}_x=\begin{pmatrix} \mathbf{1} & \mathbf{C} \end{pmatrix}$, and $\mathbf{\Delta}$ is the diagonal weight matrix with $\{\Delta_i\}_{i=1}^n$ as entries. 

We set up the matrix $\mathbf{H}^*$ to accommodate different levels of roughness penalty for $\hat{\beta}_j$'s. As $\hat{\beta}_j$'s estimate the first order derivatives $m'_x$ along each basis direction $\phi_j$, variation in penalty is reasonable. There has been little work discussing multidimensional ridge penalization  applied to nonparametric functional regressions. Reiss et al.\ (2017 \cite{reiss2017}) established a real data example of signature verification using a scalar-on-function principal coordinate model, with a fixed-value ridge parameter. Seifert and Gasser (2000 \cite{seifert2000}) pointed out that, in multivariate local linear regression, it was unlikely to find a stable minimum among the whole space of nonnegative ridge matrices.    They mentioned a potential approach to iteratively search for optimal eigenvalues, given any set of eigenvectors, to minimize the mean squared error of the estimator, but they did not discuss this idea further. 

Here we develop a data-adaptive $\mathbf{H}^*$ that is amenable to theoretical work and allows a stable implementation of a multidimensional ridge penalty. Let $\mathbf{H}^*$ be in the class of matrices diagonalizable by $\mathbf{V} :\mathbf{H}^* \in \mathcal{R}(J, \mathbf{V}):=\{\mathbf{R}^{J \times J}: \mathbf{V}^T\mathbf{R}\mathbf{V} \text{ is diagonal}\}$, where $\mathbf{V}$ is the eigenvector matrix of the weighted sample covariance of scores $\langle X_i-x, \phi_j \rangle$, $1 \le j \le J$ (detailed discussion of $\mathbf{V}$ is included in Section \ref{sec: estimation}). Thus, $\mathbf{H}^*=\mathbf{V} \mathbf{\Lambda} \mathbf{V}^T$, $\mathbf{\Lambda}$ the diagonal matrix with entries $\lambda_j \ge 0$, $1 \le j \le J$.

The data-based matrix $\mathbf{V}$ carries out a change of basis along the eigenfunctions of the weighted covariance operators based on $x$: $\phi^*_j = \mathbf{V}_j^T\mathbf{\Phi}$ $\in \mathcal{L}^2(\mathcal{T})$, $1 \le j \le J$, $\mathbf{\Phi}$ as defined at the end of Section \ref{sec:fllr}. Consequently, $\boldsymbol{\beta}^*=\mathbf{V}^T \boldsymbol{\beta}$, the derivative functional $m'_x$ projected onto the new directions, can be estimated with a multidimensional roughness penalty to avoid overfitting. When all $\lambda_j$'s are equal, this becomes the special case of applying univariate ridge penalty $\lambda$ on squared norm of $\boldsymbol{\beta}$. Throughout later calculations such as Eq.~(\ref{eq: estbias2}) and (\ref{eq: var2}), the structure of $\mathbf{H}^*$ facilities an asymptotic analysis and, in applications, avoids the potential instability of multivariate cross validation when using a more general penalty matrix.

\section{Mean Squared Error (MSE) and Parameter Selection} \label{sec: estimation}
In this section, we estimate the mean squared error (MSE) of the FLLR-r estimator $\hat{m}(x)$ and explore the FLLR-r estimator's asymptotic behavior. The optimal ridge parameters are selected by minimizing the finite-sample estimated MSE by quadratic programming. 

\subsection{Estimated Bias and Variance}
We start with the conditional bias of $\hat{m}(x)$: 
\begin{align} \label{eq: bias}
\text{bias}\left(\hat{m}(x)\right) &=E[\hat{m}(x)|X_1, \ldots, X_n]-m(x) \nonumber\\
&=\mathbf{e}_1^T\left(\dfrac{1}{n} \mathbf{C}_x^T \mathbf{\Delta} \mathbf{C}_x+\mathbf{H}\right)^{-1}\dfrac{1}{n} \mathbf{C}_x^T\mathbf{\Delta}\left[B_1\left(x\right), B_2\left(x\right), \ldots, B_n\left(x\right)\right]^T,
\end{align}
where $B_i\left(x\right) = m'_x(X_i-x)+\dfrac{1}{2} m''_{r_i}\left(\left(X_i - x\right)^2\right)$ and $r_i = x + \rho_i \left(X_i - x\right)$ for some $\rho_i \in (0, 1)$, $i=1, \ldots, n$. We  estimate the MSE using an estimate of the truncated bias which is
\begin{align} \label{eq: estbias}
\text{bias}^{(J)} \left(\hat{m}(x)\right) &=\mathbf{e}_1^T\left(\dfrac{1}{n} \mathbf{C}_x^T \mathbf{\Delta} \mathbf{C}_x+\mathbf{H}\right)^{-1}\dfrac{1}{n} \mathbf{C}_x^T\mathbf{\Delta}\mathbf{C}\mathbf{m}'_{x,J}. 
\end{align}
In addition, the exact variance of $\hat{m}(x)$ is
\begin{align}
\text{Var} \left(\hat{m}\left(x\right)\right) 
&= n^{-2} \sigma_e^2 \mathbf{e}_1^T(n^{-1} \mathbf{C}_x^T \mathbf{\Delta} \mathbf{C}_x+\mathbf{H})^{-1}\mathbf{C}_x^T\mathbf{\Delta}^2 \mathbf{C}_x(n^{-1} \mathbf{C}_x^T \mathbf{\Delta} \mathbf{C}_x+\mathbf{H})^{-1}\mathbf{e}_1 \label{eq: var}
\end{align}
Based on Eq.~(\ref{eq: estbias}) and (\ref{eq: var}), we define the truncated MSE as $\text{MSE}_x\left(J, \mathbf{H}, h\right) = \left\{\text{bias}^{(J)}\left(\hat{m}(x)\right)\right\}^2 + \text{Var} (\hat{m}(x))$. We then reconstruct $\text{MSE}_x$ using the new orthogonal basis $\phi_1^*$, $\ldots \phi_J^*$.

\subsection{Reconstructed MSE and Ridge Penalty Optimization} \label{subsec: QP}
As stated in Section \ref{sec:ridge}, the columns of $\mathbf{V}$ for the basis change are the eigenvectors of the weighted sample covariance of the projected scores $\langle X_i, \phi_j \rangle$, $1 \le j \le J$. We define two weighted sample statistics of the projected scores: let $\hat{\mu}_j$ be the weighted average where $\hat{\mu}_j = \sum_i \Delta_i \langle X_i - x, \phi_j\rangle/\sum_i \Delta_i$, and similarly the weighted sample covariance is $\hat{\sigma}_{jk} = \sum_i \Delta_i \langle X_i - x, \phi_j\rangle \langle X_i - x, \phi_k\rangle/\sum_i \Delta_i - \hat{\mu}_j \hat{\mu}_k$ for scores on basis $\phi_j$ and $\phi_k$. Therefore, $\hat{\boldsymbol{\mu}} = \mathbf{C}^T\mathbf{\Delta}\mathbf{1}/\sum_i \Delta_i = (\hat{\mu}_1, \ldots, \hat{\mu}_k)^T$, and the $J \times J$ sample covariance matrix
$\mathbf{W}_{x, J}$ is
\begin{equation} \label{eq: wtcov}
\mathbf{W}_{x, J} =\left( \mathbf{C}^T-\hat{\boldsymbol{\mu}}\mathbf{1}^T\right)\mathbf{\Delta} \left( \mathbf{C}-\mathbf{1}\hat{\boldsymbol{\mu}}^T\right) \Big / \sum_i \Delta_i =\mathbf{C}^T\mathbf{\Delta}\mathbf{C}\Big /\sum_i \Delta_i-\hat{\boldsymbol{\mu}}\hat{\boldsymbol{\mu}}^T.
\end{equation}
Columns of the matrix $\mathbf{V}$ are the eigenvectors of $\mathbf{W}_{x, J}$.  Positive semi-definiteness of $\mathbf{W}_{x, J}$ is proved in the supplementary material.

With $a=(n^{-1} \sum_{i=1}^n \Delta_i)^{-1}$, we define
\begin{align}\label{eq: M}
    \tilde{\mathbf{M}} &= a^{-1}\mathbf{W}_{x, J} = \mathbf{V} \mathbf{\Lambda}^* \mathbf{V}^T \nonumber \\
    &=n^{-1}  \mathbf{C}^T \mathbf{\Delta} \mathbf{C} - a\left(n^{-1}  \mathbf{C}^T \mathbf{\Delta} \mathbf{1}\right) \left( n^{-1} \mathbf{1}^T \mathbf{\Delta} \mathbf{C}\right),
\end{align} 
where $\mathbf{\Lambda}^*$ is the diagonal matrix containing eigenvalues $\tilde{\gamma}_j > 0$, $\forall 1 \le j \le J$, of $\tilde{\mathbf{M}}$. By the asymptotic properties from Ba\'illo and Gran\'e (2009 \cite{baillo2009}), $a^{-1} = 1+o_p\left((n\psi_x(h))^{-1/2}\right)$. Then, $\mathbf{M}=\tilde{\mathbf{M}}+\mathbf{H}^*=\mathbf{V} \mathbf{D}\mathbf{V}^T$, where $\mathbf{D}=\mathbf{\Lambda}^*+\mathbf{\Lambda}$. Some other key terms denoted are:
\begin{itemize}
    \item $\mathbf{d}_1=\mathbf{V}^T\left(\dfrac{1}{n}\mathbf{C}^T\mathbf{\Delta}\mathbf{1}\right) = a^{-1}\hat{\boldsymbol{\mu}}^*$, where $\hat{\boldsymbol{\mu}}^* = \mathbf{V}^T \hat{\boldsymbol{\mu}}$ is the weighted sample average of scores $\langle X_i - x, \phi_j\rangle$ on the new basis $\phi_j^*$, $1 \le j \le J$;
    
    \item $d_2=n^{-1} \mathbf{1}^T\mathbf{\Delta} \mathbf{C} \mathbf{m}'_{x,J} = a^{-1} \langle \hat{\boldsymbol{\mu}}, \mathbf{m}'_{x,J}\rangle$;
    
    \item $\mathbf{d}_3=\mathbf{V}^T\left(\dfrac{1}{n}\mathbf{C}^T \mathbf{\Delta} \mathbf{C}\right)\mathbf{m}'_{x,J} = a^{-1} \mathbf{V}^T\left(\mathbf{W}_{x, J} + \hat{\boldsymbol{\mu}}\hat{\boldsymbol{\mu}}^T \right)\mathbf{m}'_{x,J}$.
\end{itemize}
   After some calculation, the bias from Eq.(\ref{eq: estbias}) is re-expressed as
\begin{align} \label{eq: estbias2}
\text{bias}^{(J)} (\hat{m}(x)) 
&= \langle \hat{\boldsymbol{\mu}}, \mathbf{m}'_{x,J}\rangle + a^{-1} \langle \hat{\boldsymbol{\mu}}, \mathbf{m}'_{x,J}\rangle \langle\hat{\boldsymbol{\mu}}^*, \mathbf{D}^{-1} \hat{\boldsymbol{\mu}}^*\rangle \nonumber\\
&\qquad - \hat{\boldsymbol{\mu}}^*{}^T\mathbf{D}^{-1}\mathbf{V}^T\left(\mathbf{W}_{x, J} + \hat{\boldsymbol{\mu}}\hat{\boldsymbol{\mu}}^T \right)\mathbf{m}'_{x,J} \nonumber\\
&=ad_2+a^2 d_2 \cdot \mathbf{d}_1^T\mathbf{D}^{-1}\mathbf{d}_1-a \cdot \mathbf{d}_1^T\mathbf{D}^{-1}\mathbf{d}_3,
\end{align}
and also the exact variance of $\hat{m}(x)$ in Eq.(\ref{eq: var}) is
\begin{align} \label{eq: var2}
\text{Var} (\hat{m}(x))
&=\left\|\sigma_e \mathbf{\Delta} \dfrac{1}{n} \left[a\mathbf{1} + \left(a^2 \mathbf{1}\mathbf{d}_1^T-a \mathbf{C}\mathbf{V}\right)\mathbf{D}^{-1}\mathbf{d}_1\right] \right\|^2
\end{align}

Detailed calculations are included in the supplementary materials.  Having $\mathbf{H}^*$ in the class of $\mathcal{R}(J, \mathbf{V})$ circumvents the complication of general matrix inversion, and transforms the problem of building $\text{MSE}_x\left(J, \mathbf{H}, h\right)$-optimal $\mathbf{H}^*$ into a quadratic programming problem with parameters $\lambda_1$, \ldots, $\lambda_J$, which are stored only in $\mathbf{D}$.

For brevity, we let 
\begin{itemize}
    \item $\mathbf{D}_1^*=\text{diag} \{\mathbf{d}_1\}$, $\mathbf{A}^*_1=(a^2d_2\mathbf{d}_1^T-a\mathbf{d}_3^T)\mathbf{D}^*_1$, $\mathbf{A}^*_2=\dfrac{1}{n} \sigma_e \mathbf{\Delta} (a^2\mathbf{1}\mathbf{d}_1^T-a\mathbf{C}\mathbf{V})\mathbf{D}^*_1$;
    
    \item $S_1=-ad_2$, $\mathbf{S}_2=-\dfrac{1}{n}a\sigma_e\mathbf{\Delta} \mathbf{1}$;
    
    \item $1/\tilde{\boldsymbol{\gamma}}=(1/\tilde{\gamma}_1, \ldots, 1/\tilde{\gamma}_J)$, and $\mathbf{b}=\left\{(\tilde{\gamma}_1+\lambda_1)^{-1}, \ldots, (\tilde{\gamma}_J+\lambda_J)^{-1}\right\}^T$.
\end{itemize}
With Eq.~(\ref{eq: estbias2}) and (\ref{eq: var2}), we search for optimal $\mathbf{b}^*$, and therefore optimal $\lambda_1^*, \ldots, \lambda_J^*$, by minimizing $\text{MSE}_x\left(J, \mathbf{H}, h\right)$: 
\begin{equation} \label{eq: leastsq}
\begin{split}
&\min_\mathbf{b} \text{MSE}_x\left(J, \mathbf{H}, h\right) 
= \min \left\{\|\mathbf{A}^*_1 \mathbf{b}-S_1\|^2+\|\mathbf{A}^*_2\mathbf{b}-\mathbf{S}_2\|^2\right\}, \\
&\qquad \qquad \qquad \qquad \text{  s.t.  } \mathbf{0} \le \mathbf{b} \le 1/\tilde{\gamma}.
\end{split}
\end{equation}
For $\mathbf{m}'_{x,J}$ in $d_2$ and $\mathbf{d}_3$, we use direct plug-in estimator $\hat{\boldsymbol{\beta}}^P$ from fitting Eq.~(\ref{eq: ridgerule}) where $\mathbf{H}^*$ is zero matrix, i.e., from the original FLLR rule, and $\hat \sigma_e$ is the standard error from FLLR fitting. 

\begin{remark} 
The data-adaptive structure of $\mathbf{H}^*$ enables the estimated MSE of $\hat{m}(x)$ to be written in a quadratic form in terms of the ridge parameters $\lambda_1, \ldots, \lambda_J$ for optimization, while general multivariate diagonal matrices would fail to do so. Based on Seifert and Gasser's (2000 \cite{seifert2000}) discussion of multivariate local polynomial regression, a more generic approach to find optimal eigenvalues of the general ridge matrix with other given sets of eigenvectors (unequal to $\mathbf{V}$) may be found iteratively, but at potentially high computational cost, while $\mathbf{H}^*=\mathbf{V}\mathbf{\Lambda}\mathbf{V}^T$ is not only empirically stable, but has desirable theoretical properties, which we discuss below.
\end{remark}

\subsection{Asymptotic Properties of FLLR-r}
Let $\mathcal{P}_{J}m'_x$ be the projection of the bounded linear functional $m'_x$ onto the subspace of $\mathcal{L}^2(\mathcal{T})$ spanned by $\phi_1, \ldots, \phi_J$ (also by $\phi^*_1, \ldots, \phi^*_J$), and $\mathcal{P}_{J^\perp} m'_x$ the projection onto the complementary subspace. We derive the asymptotic properties of $\hat{m}(x)$ as follows.

\begin{theorem}
Let Assumptions \ref{A1} - \ref{A3} hold. As $n \to \infty$, the conditional bias and variance of FLLR-r estimator $\hat{m}(x)$ are
\begin{enumerate}[i)]
    \item 
    $E\left(\hat{m}(x)|X_1, \ldots, X_n\right) - m(x) = O_P \left(\|\mathcal{P}_{J}m'_x\|h\right) + O_P \left(\|\mathcal{P}_{J^\perp} m'_x\| h\right) + O_P \left(h^2\right) + \\
    \kappa_J \cdot \left\{O_P\left(\|\mathcal{P}_{{J}^\perp} m'_x\| h^3\right) + O_P\left(h^4\right)\right\}$, where $\kappa_J = \max_{1\le j \le J} \dfrac{1}{\tilde{\gamma}_j+\lambda_j}$;
    \item
    $\text{Var}\left(\hat{m}(x)|X_1, \ldots, X_n\right) = O_P\left(\dfrac{1}{n\psi_x(h)}\right) +\kappa_J \cdot O_P\left(\dfrac{h^2}{n\psi_x(h)}\right) + \kappa_J^2 \cdot O_P\left(\dfrac{h^4}{n\psi_x(h)}\right)$.
\end{enumerate}
\end{theorem}

For the projected derivatives $\mathcal{P}_{J}m'_x$ and $\mathcal{P}_{J^\perp} m'_x$ in i), $m'_x = \mathcal{P}_{J}m'_x + \mathcal{P}_{J^\perp} m'_x$, and $\|m'_x\|^2 = \|\mathcal{P}_{J}m'_x\|^2 + \|\mathcal{P}_{J^\perp} m'_x\|^2$.  The sizes of both $\mathcal{P}_{J}m'_x$ and $\mathcal{P}_{J^\perp} m'_x$ are dependent on the magnitude of the derivative $m'_x$. As $J$ increases, $\|\mathcal{P}_{J}m'_x\| \to \|m'_x\|$ and $\|\mathcal{P}_{J^\perp} m'_x\| \to 0$.
The coefficient $\kappa_J$ is the minimum sum of a weighted covariance eigenvalue plus a corresponding ridge penalty. 
We here add an additional assumption \ref{A4} on $\kappa_J$ to discuss asymptotic behavior further. 

\begin{assump} \label{A4}
As $n \to \infty$ and $J=J(n) \to \infty$, $h^2/ \min_{1\le j \le J}\{\tilde{\gamma}_j+\lambda_j\} = O_P(1)$. Or equivalently, $h^2\kappa_J = O_P(1)$.
\end{assump}

Assumptions about the minimum eigenvalues of the score covariance matrices are not uncommon in local regression. See e.g., Reiss et al.(2017 \cite{reiss2017}), Ferraty and Nagy (2019 \cite{ferraty2019}). Here we are able to relax the restriction on the decay rates of the eigenvalues, as the ridge parameters can compensate for fast decreasing $\tilde{\gamma}_j$'s. With the additional Assumption \ref{A4}, bias and variance of $\hat{m}(x)$ are
\begin{corollary} \label{corollary}
With Assumptions \ref{A1}-\ref{A4},
\begin{enumerate}[i)]
    \item 
    $E\left(\hat{m}(x)|X_1, \ldots, X_n\right) - m(x) = O_P \left(\|\mathcal{P}_{J}m'_x\|h\right) + O_P \left(\|\mathcal{P}_{J^\perp} m'_x\| h\right) + O_P \left(h^2\right)$;
    \item
    $\text{Var}\left(\hat{m}(x)|X_1, \ldots, X_n\right) = O_P\left(\dfrac{1}{n\psi_x(h)}\right)$.
\end{enumerate}
\end{corollary}
Consequently, compared to FLLR estimator in Ferraty and Nagy (2019 \cite{ferraty2019}), the bias of $\hat{m}(x)$ has an additional term $O_P \left(\|\mathcal{P}_{J}m'_x\|h\right)$ from the ridge penalty, while the bound on the variance of $\hat{m}(x)$ is equivalent to FLLR's under \ref{A1} -- \ref{A4}.

\section{Simulation} \label{sec:sim}
We use simulated data to compare the performance of FLLR-r with the unpenalized local linear model FLLR, as well as the functional Nadaraya-Watson estimator (NW).  The functional Nadaraya-Watson estimator is a natural extension of its multivariate version, discussed in past work, e.g., Ferraty et al.\ (2007 \cite{ferraty2007}). NW estimates $m(x)$ as
\begin{equation*}
    \hat{m}^{NW}(x) = \dfrac{\sum_{i=1}^n Y_i K\left(\left\|X_i - x\right\|/h\right)}{\sum_{i=1}^n K\left(\left\|X_i - x\right\|/h\right)}
\end{equation*}
This section also discusses data-based selection of $h$ and $J$.

\subsection{Data Setup}
We use $201$ Fourier basis on $\mathcal{T} = [0, 1]$ for sample curve generation, where $\phi_{1}(t)=1, \phi_{2}(t)=\sqrt{2}\cos(2\pi t), \phi_{3}(t)=\sqrt{2}\sin(2\pi t), \ldots, \phi_{j}(t)=\sqrt{2} \cos(j\pi t)$ or $\sqrt{2} \sin \left(\left(j-1\right)\pi t\right)$ for $1< j \le 201$ according as $j$ is even or odd. With eigenvalues $\theta_j=1/j$, $X_i=\sum_{j=1}^{201} \sqrt{\theta_j}U_{ij}\phi_j$, where $U_{ij}$ is uniformly distributed i.i.d.\ scores on $[-\sqrt{3}, \sqrt{3}]$. The curves $X_i$ are observed on $51$ equispaced points, $t=0, 0.02, \ldots, 1$, on $\mathcal{T}=[0,1]$, with observation error $\xi_t \sim N(0, \sigma_t=0.2)$. Local linear pre-smoothing is applied with the direct plug-in bandwidth of Ruppert el al.\ (\cite{RSW1995}).
 
We follow the spirit of Ferraty and Nagy (2019 \cite{ferraty2019}) to design the regression $m$: $\mathcal{L}^2(\mathcal{T}) \to \mathbb{R}$ as a combination of linear and nonlinear models. Let 
\begin{align} \label{eq:sim}
    m(X_i) &= \left(1-a\right) \langle X_i , \sum_{j=1}^{30} \phi_j\rangle + a \sum_{j=1}^{20} \exp\left(-\langle X_i, \phi_j\rangle^2\right) \nonumber \\
    & = \left(1-a\right) \sum_{j=1}^{30} \sqrt{\theta_j}U_{ij} + a \sum_{j=1}^{20} \exp\left(-\theta_jU_{ij}^2\right),
\end{align}
where the sliding parameter $a \in [0, 1]$ varies the shape of $m$ between the linear regression (when $a=0$) and strongly nonlinear regression (when $a=1$). Random error $\epsilon_i \sim N(0, \sigma_e=0.5)$ is added to each observation: $Y_i = m(X_i) + \epsilon_i$.

\subsection{Selection of Tuning Parameters $J^*$, $h_r$, $h_d$} \label{sec: implement}
There are several global tuning parameters we must select for estimating $m(x)$: the optimal cut-off basis count $J^*$, regression bandwidth $h_r$ and derivative bandwidth $h_d$.   

As noted at the end of Section \ref{subsec: QP}, an estimated derivative vector at $x$, $\hat{\boldsymbol{\beta}}_x^P$, is necessary for the constructing ridge penalty. 
As an estimator, we use
\begin{equation*}
    \hat{\boldsymbol{\beta}}^P_x = \left[\mathbf{0} \quad \mathbf{I}\right]\left(\dfrac{1}{n}\mathbf{C}_x^T \mathbf{\Delta} \mathbf{C}_x\right)^{-1}\dfrac{1}{n}\mathbf{C}_x^T\mathbf{\Delta} \mathbf{Y},
\end{equation*}
where $[\mathbf{0} \quad \mathbf{I}]$ is a $J \times (J+1)$ matrix with a first column of 0's followed by an identity matrix. We obtain the preliminary $\hat{\boldsymbol{\beta}}^P_x$ from FLLR, i.e., without a ridge penalty, using the bandwidth $h_d$ discussed below.   Then, using $\hat{\boldsymbol{\beta}}^P_x$ we estimate $m(x)$ with a different bandwidth, $h_r$, by FLLR-r fitting, i.e., with a ridge penalty. Ferraty and Nagy (2019 \cite{ferraty2019}) mentioned that the asymptotic behavior of the estimated regression operator and its derivative are different, which is the motivation for using two distinct bandwidths $h_r$, $h_d$.

Nested leave-one-out cross-validation (LOOCV) is used for $J^*$ and $h_r$, but it is not suitable for $h_d$, as there is no direct way to measure the fitness of $\hat{\boldsymbol{\beta}}^P_x$.   Instead, we adopt wild bootstrapping of residuals to select $h_d$. The wild bootstrap was proposed by Wu (1986 \cite{wu1986}), and Ferraty et al.\ (2007 \cite{ferraty2007}) introduced it for bandwidth selection in nonparameteric functional regression.  Later, this method was applied to first-order functional derivative estimation by Ferraty and Nagy (2019 \cite{ferraty2019}). Also, Slaoui (2020 \cite{slaoui2020}) adopted the wild bootstrapping for bandwidth selection in recursive nonparametric functional regression.  

The tuning of the parameters $h_d$, $h_r$, and $J$ follows these steps:
\begin{enumerate}[i)]
    \item For each candidate cut-off basis $J$: use LOOCV to select the optimal FLLR bandwidth $h_{LL}$ which satisfies
    \begin{equation*}
        \min_h \dfrac{1}{n}\sum_{i=1}^n \left(Y_i - \hat{m}^{(-i)}_{LL}\left(X_i | h, J\right)\right)^2,
    \end{equation*}
    where $\hat{m}^{(-i)}_{LL}$ is the FLLR estimated regression operator at $X_i$, with $X_i$ removed in training.  In addition, denote the estimated derivative at $X_i$ using $h_{LL}\left(J\right)$ as $\tilde{\boldsymbol{\beta}}_{X_i} (h_{LL})$, which is estimated simultaneously with $\hat{m}^{(-i)}_{LL}$. Note that $h_{LL}$ is dependent on~$J$.
    
    \item 
    Define the residuals $\hat{\epsilon}_i = Y_i - \hat{m}^{(-i)}_{LL}\left(X_i | h_{LL}, J\right)$. Let the wild bootstrapped residuals be $\epsilon_i^b = \hat{\epsilon}_i\cdot v_i^b$, where $v_i^b$, $i=1, \ldots, n$ are i.i.d.\ random variables with $E(v_i^b) = 0$, and the next several moments equal to $1$. We use the most common choice, Mammen's two-point distribution (Mammen, 1993 \cite{mammen1993}):
    \begin{equation*}
        v_i^b = 
        \begin{cases}
        -\left(\sqrt{5}-1\right) /2, \text{  with probability } \left(\sqrt{5} + 1\right)/\left(2\sqrt{5}\right), \\
        \left(\sqrt{5}+1\right) /2,  \text{  with probability } \left(\sqrt{5} - 1\right)/\left(2\sqrt{5}\right).
        \end{cases}
    \end{equation*}
    In this case, $E(v_i^b) = 0$, $E\{(v_i^{b})^2\}= 1$, and $E\{(v_i^{b})^3\} = 1$, which ensure that the bootstrapped residuals $\epsilon_i^b$ have same first three moments as $\hat{\epsilon}_i$, $i=1, \ldots, n$ (see e.g.\ \cite{mackinnon2012}, \cite{mammen1993}). Other choices of $v_i$ include the Rademacher distribution (Davidson and Flachaire, 2008 \cite{davidson2008}) and Mammen's continuous distribution (Mammen, 1993 \cite{mammen1993}).

    \item Set $Y_i^{b} = \hat{m}^{(-i)}_{LL}\left(X_i | h_{LL}, J\right) + \epsilon_i^b$. For each of $b = 1, \ldots, B$ repetitions, estimate the derivative at $X_i$ with bandwidth $h$ as $\tilde{\boldsymbol{\beta}}_{X_i}^b (h)$ using the new set of data $\left(X_i, Y_i^{b}\right)$. Let $\hat{\boldsymbol{\beta}}_{X_i} (h) = \sum_{b=1}^B \tilde{\boldsymbol{\beta}}_{X_i}^b (h) / B$.
    
    \item
    Then, choose $h_d$ as the global bandwidth for the preliminary derivative estimation:
    \begin{equation} \label{eq: hdrule}
        h_d = \text{argmin}_h \dfrac{1}{n} \sum_{i=1}^n \left \| \tilde{\boldsymbol{\beta}}_{X_i} (h_{LL}) - \hat{\boldsymbol{\beta}}_{X_i} (h) \right\|^2. 
    \end{equation}
  Due to the difficulty of functional derivative estimation, Ferraty and Nagy (2019 \cite{ferraty2019}) designed the ad hoc bandwidth selector which minimizes the variation of the estimated derivative using $h_d$ from the one using $h_{LL}$ as in \eqref{eq: hdrule}, but doing this ignores the bias introduced by the latter. Future research can focus on developing a more systematic estimator for the functional derivatives.

    \item
    After the estimated derivative $\hat{\boldsymbol{\beta}}_x^P$ is calculated using bandwidth $h_d$, we plug $\hat{\boldsymbol{\beta}}_x^P$ into $d_2$ and $\mathbf{d_3}$ in Section \ref{subsec: QP} and search for the optimal $\lambda_j$'s in Eq.~\ref{eq: leastsq}. LOOCV is applied to select the global bandwidth $h_r$ for FLLR-r regression. In addition, since $h_d$, $h_r$ are all dependent on $J$, the optimal $J^*$ for FLLR and FLLR-r is determined through the nested LOOCV steps i) to v).

\end{enumerate}

\subsection{Model Performance Comparison} \label{sec:comparison}
We simulated $200$ Monte Carlo repetitions of model Eq.~(\ref{eq:sim}), each with $n_T=100$ training and $n_t=50$ test cases. To compare estimator performance at different levels of linearity of the regression function $m$, we implemented multiple models with $a = 0.3, 0.4, 0.5, 0.6, 0.7, 0.8$.  Larger $a$ implies stronger nonlinearity. 

The candidate cut-off $J$ values for LOOCV ranged from $1$ to $15$. In addition, for computational convenience, we translated each of the continuous bandwidths $h_{LL}$, $h_d$, $h_r$ to a discrete parameter $k_h$, which is the number of nearest neighbors of $x$. This technique was adopted from Ferraty et al.\ (2007 \cite{ferraty2007}),  and it was also applied in Ferraty and Nagy (2019 \cite{ferraty2019}). The maximum percentage of training cases that can selected as neighbors was set to $70\%$.

\begin{table}[h!]
    \centering
    \begin{tabular}{r|r|r|r|r|r|r}
    \hline
         & $a=0.3$ & $a=0.4$ & $a=0.5$ & $a=0.6$ & $a=0.7$ & $a=0.8$  \\
         \hline
     FLLR & 0.376 & 0.426 & 0.501 & 0.616 & 0.761 & 0.850\\
     \hline
     FLLR-r & \textbf{0.367} & \textbf{0.413} & \textbf{0.475} & \textbf{0.571} & \textbf{0.689} & \textbf{0.810}\\
     \hline
     NW & 0.544 & 0.572 & 0.616 & 0.682 & 0.761 & 0.858\\
     \hline
    \end{tabular}
    \caption{Averaged error ratios of prediction by FLLR, FLLR-r, and NW.  The optimal result for each $a$ is in bold. FLLR-r has the smallest error ratio in all scenarios.}
    \label{tab:sim_mse}
\end{table}

Table \ref{tab:sim_mse} records the averaged error ratios of prediction by each of the three methods at each level of nonlinearity, $a$. Error ratio (ER) is calculated by $\sum_{i=1}^{n_t} \left(Y_i - \hat{m}^*\left(X_i\right)\right)^2/$ $\sum_{i=1}^{n_t} \left(Y_i - \bar{Y}\right)^2$, where $\hat{m}^*\left(X_i\right)$ is estimated regression on $i$-th test case by each method, and $\bar{Y}$ is average of $Y_i$'s. ER is essentially the same as the widely used metric for regression, $(1-R^2)$.  As $a$ increases, the overall level of ER increases as well, but FLLR-r always achieves the best performance among the three methods.

\begin{figure}[h!] 
\centering
      \includegraphics[scale=0.525]{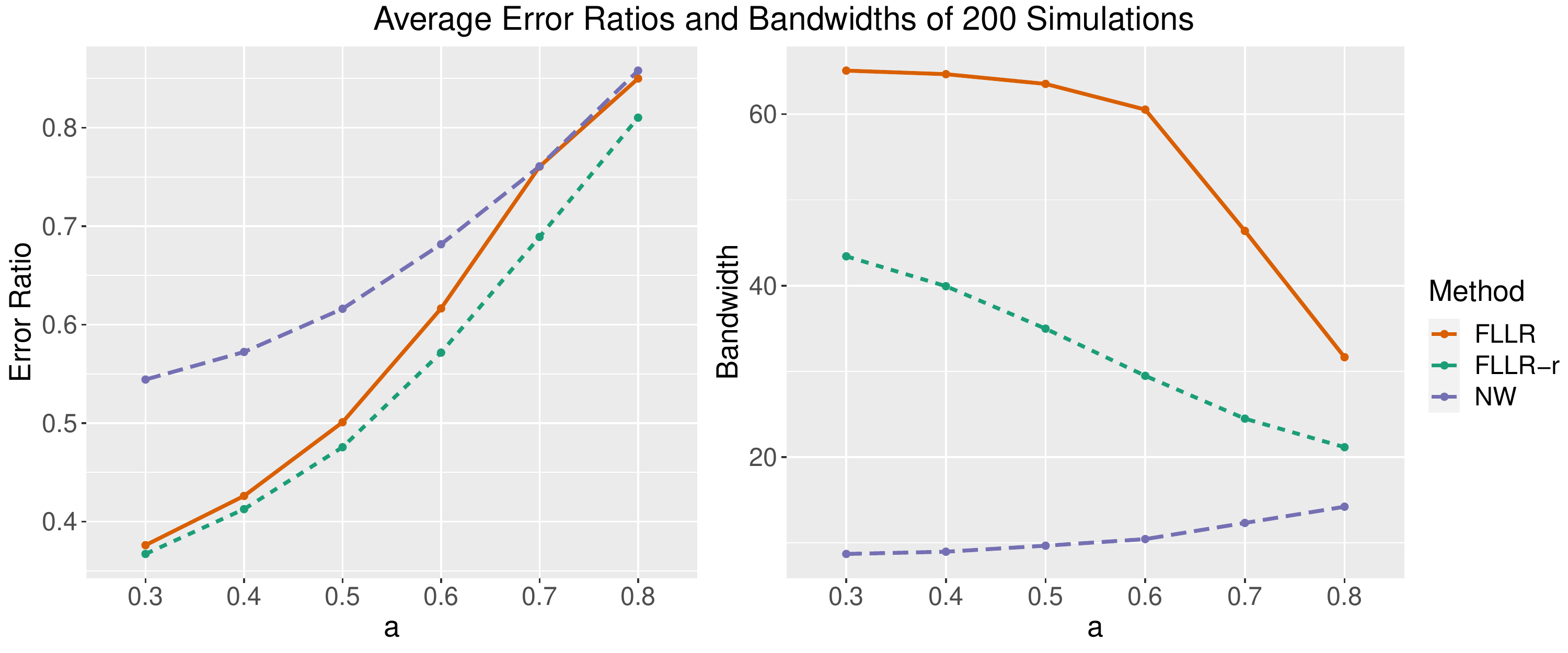}
  \caption{Plots of average error ratios (left) and bandwidths (right) by each method for $a$ from $0.3$ to $0.6$. FLLR-r achieves the lowest ER among the three methods, and uses  a  smaller bandwidth than FLLR.}
  \label{fig:sim_line}
\end{figure}

\begin{figure}[h!] 
\centering
      \includegraphics[scale=0.525]{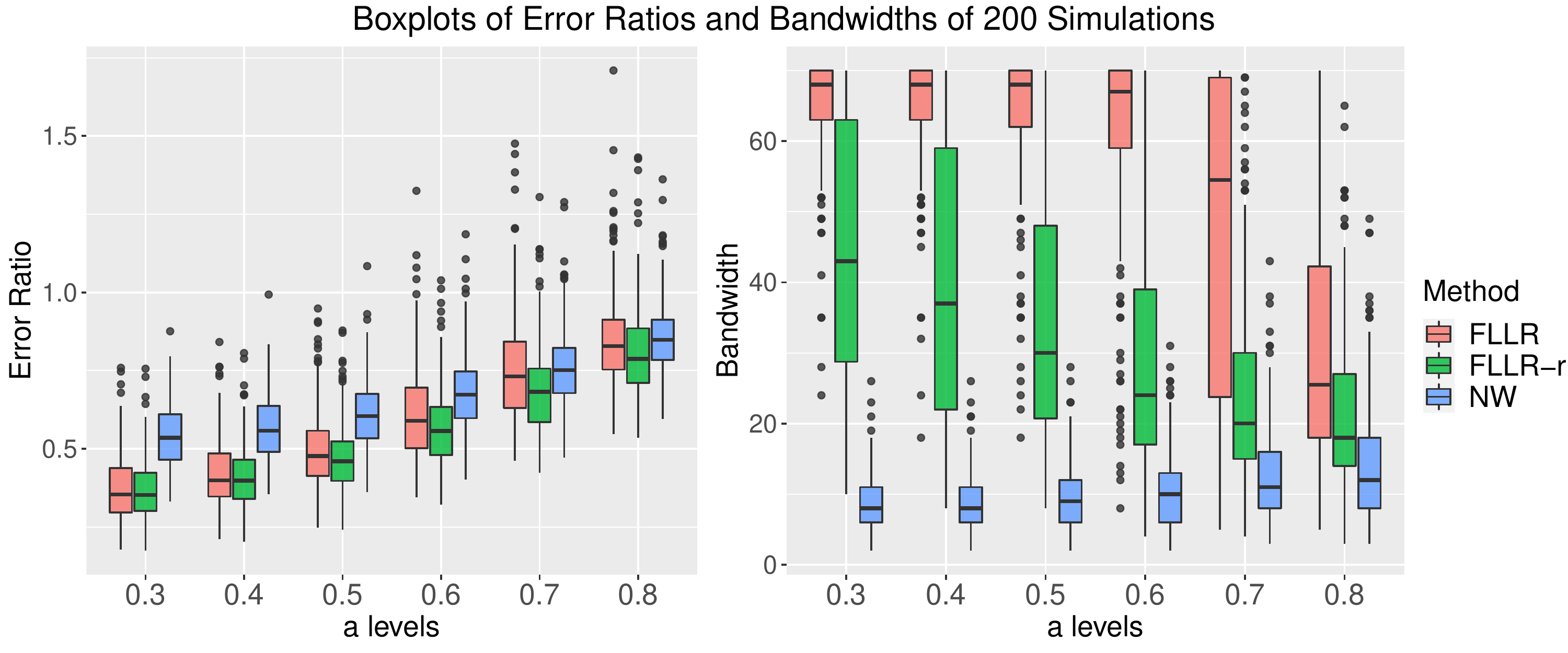}
  \caption{Boxplots of simulation error ratios (left) and cross-validated bandwidths (right) for FLLR, FLLR-r and NW at different levels. FLLR-r is advantageous in prediction especially at higher nonlinearity levels, and it needs smaller bandwidth for finite sample data.}
  \label{fig:sim_box}
\end{figure}

Fig.\ \ref{fig:sim_line} and \ref{fig:sim_box} summarize, at different $a$ levels, the performance of the three methods by error ratios and their selected bandwidth $k_{h_r}$ for the ridge estimator. According to the plots, FLLR and FLLR-r have very close prediction errors at lower levels of $a$. However, as the linearity of regression operator decreases with $a$ increasing, the performance of FLLR and FLLR-r diverge. At higher $a$, FLLR shows larger prediction errors and more outliers than FLLR-r, as seen in the left panel of Fig.~\ref{fig:sim_box}. Also, FLLR-r requires a smaller bandwidth at each level of $a$ in comparison with FLLR, as the right boxplot of Fig.~\ref{fig:sim_box} points out. The third quartile of the FLLR-r bandwidth among the $200$ simulations is below the first quartile of FLLR's for $a \le 0.6$.  The simulations show that FLLR-r is able to achieve smaller variation that FLLR. Such behavior is consistent with the ridge penalty in multivariate regression, which is known to reduce the variance of estimation while increasing its bias. 

In the supplementary material, we include results for derivative estimation by both FLLR and FLLR-r. Derivatives generated by FLLR-r tend to be flatter than FLLR.

\section{Two Real Data Examples} \label{sec:realdata}
\subsection{Particulate Matter (PM) Emission of Heavy Duty Trucks}
As our first example, we investigate the relationship between movement patterns of heavy duty trucks and particulate matter (PM) emissions.  We use the dataset in McLean et al.\ (2015 \cite{M2015})  originally extracted from the Coordinating Research Council E55/59 emissions inventory program documentary (Clark et al.\ 2007 \cite{C2007}). The dataset contains $108$ records of truck speeds in miles/hour over $90$ second intervals, and the logarithms of their PM emission in grams (log PM), captured by $70$ mm filters. We convert log PM back to the original PM weight by the exponential transformation. For each of the $200$ simulations, the dataset is randomly split into training and test cases by a ratio of $2:1$. Percentage of training cases considered for bandwidth selection is $50\%$.

\begin{figure}[h!] 
\centering
      \includegraphics[scale=0.525]{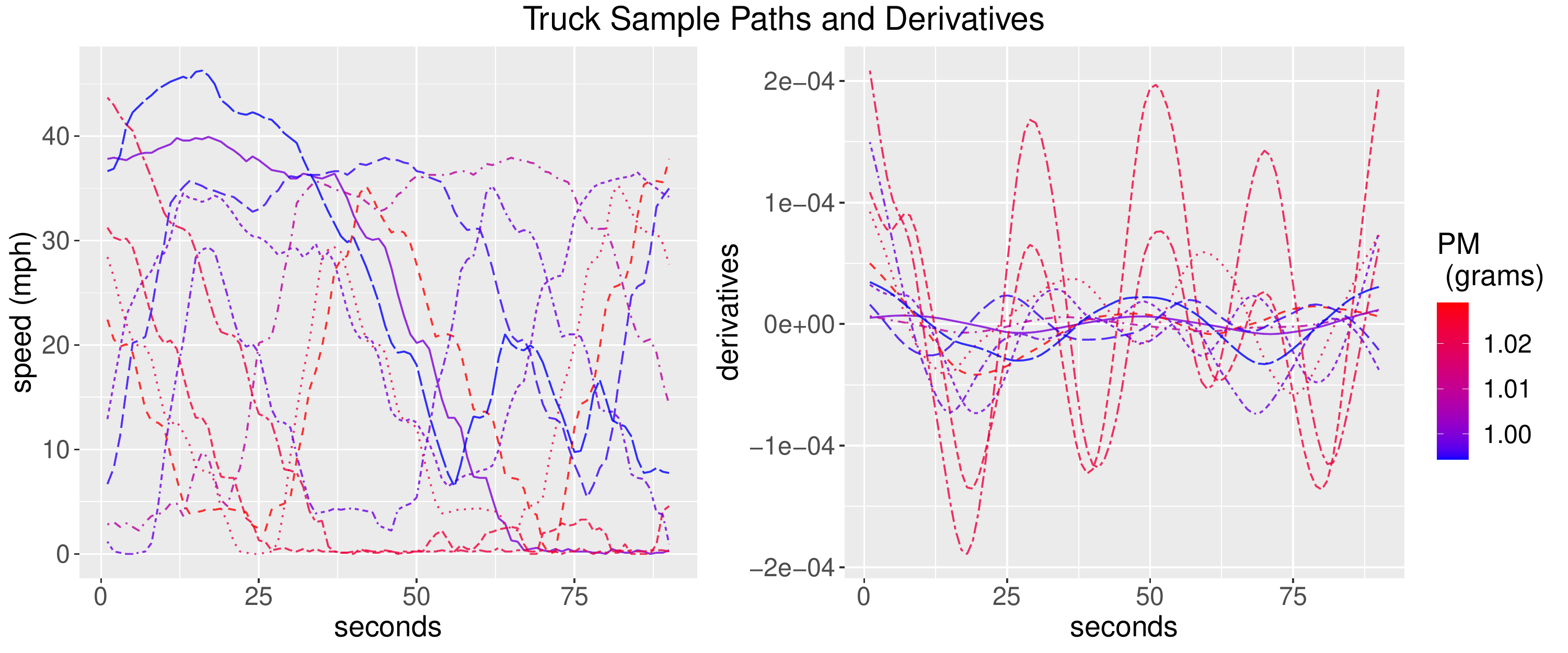}
  \caption{Plots of $10$ randomly sampled paths (left) and their corresponding estimated derivatives (right). Gradient color scale is used to represent the PM emission related to each sample. Derivatives on the right plot are calculated from estimated derivative scores $\hat{\boldsymbol{\beta}}_{X_i}^P$ (as in Section \ref{sec: implement}), $i = 1, \ldots, 10$, applied to the functional basis.}
  \label{fig:truck_sample}
\end{figure}

The left panel of Fig.~\ref{fig:truck_sample} shows $10$ randomly sampled paths of truck speed, where the gradient color scale corresponds to PM emissions in grams. The right panel includes the estimated derivative functions by FLLR-r for each case, calculated from scores $\hat{\boldsymbol{\beta}}_{X_i}^P$, $i = 1, \ldots, 10$, applied to the functional basis.  As the derivatives vary substantially across different records, we can safely infer that the regression function mapping truck movement patterns to PM emissions is nonlinear.

\begin{table}[h!]
    \centering
    \begin{tabular}{r|rrrr}
    \hline
         & FLLR & FLLR-r & NW & FLM \\
         \hline
       Error Ratio (ER) & 0.715 & \textbf{0.652} & 0.771 & 0.862\\
       \hline
       Mean $k_h$ & 30.4 & 20.3 & 6.5 & N/A\\
       \hline
    \end{tabular}
    \caption{Averaged error ratios of four models for $200$ repetitions and average num of nearest neighbors $k_h$ used by each method are also included.}
    \label{tab:truck_table}
\end{table}

\begin{figure}[h!] 
\centering
      \includegraphics[scale=0.525]{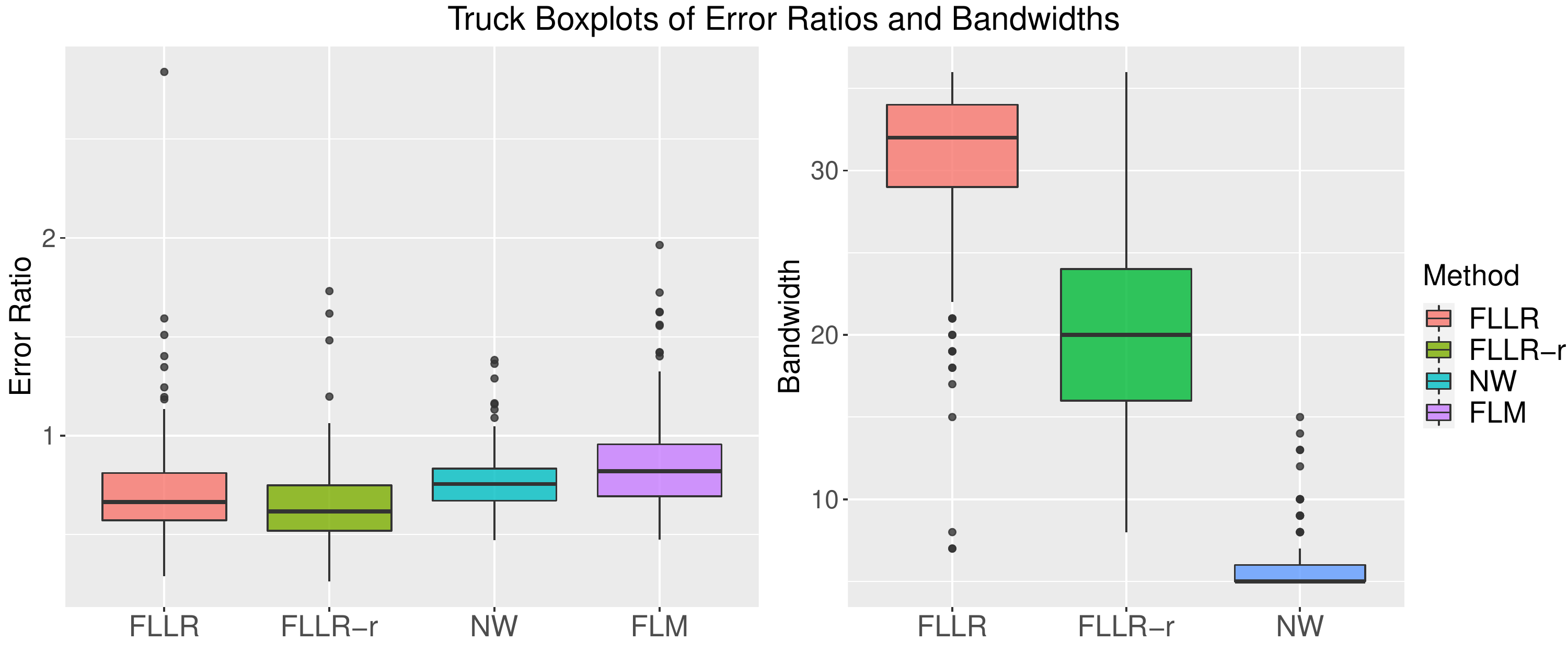}
  \caption{Boxplots of error ratios and bandwidths of the three methods for estimated regression of PM emission on truck speed.}
  \label{fig:truck_box}
\end{figure}

Table \ref{tab:truck_table} shows the averaged error ratios from $200$ repetitions by each estimator, as well as the  number of nearest neighbors selected by cross validation. 
An additional estimator, FLM,  the functional linear model with scalar ridge penalty in the R package `fda.usc'(\cite{fda_usc}), is included.   Cross validation is used for its ridge parameter tuning.
We see that FLLR-r is able to achieve the lowest error ratio and uses fewer nearest neighbors than \mbox{FLLR}. Average cut-off $J^*$ for FLLR and FLLR-r are $9.98$ and $11$ respectively. Boxplots in Fig.\ \ref{fig:truck_box}  show the advantages of FLLR-r in prediction accuracy and bandwidth choice. 

\subsection{Oil Content in Cargill Corn Samples}
The second example uses a data set of $80$ corn specimens measured with different NIR spectrometers at wavelengths $1100$--$2498$nm at $2$nm intervals. We choose instrument mp5 for analysis here. Oil content in percentage of total corn kernel weight is also recorded. We use FLLR, FLLR-r and NW to examine the regression mapping corn NIR data to oil content. The original data set can be accessed online at \url{https://eigenvector.com/resources/data-sets}. Again, the dataset is randomly split into training and test cases by a ratio of $2:1$ during each of the 200 simulations. Percentage of training cases considered for bandwidth selection is $50\%$.  As in the previous truck example, Fig.~\ref{fig:corn_sample} shows $10$ randomly sampled NIR paths, with derivatives $m'_{X_i}$ estimated by FLLR-r. 

\begin{figure}[h!] 
\centering
      \includegraphics[scale=0.525]{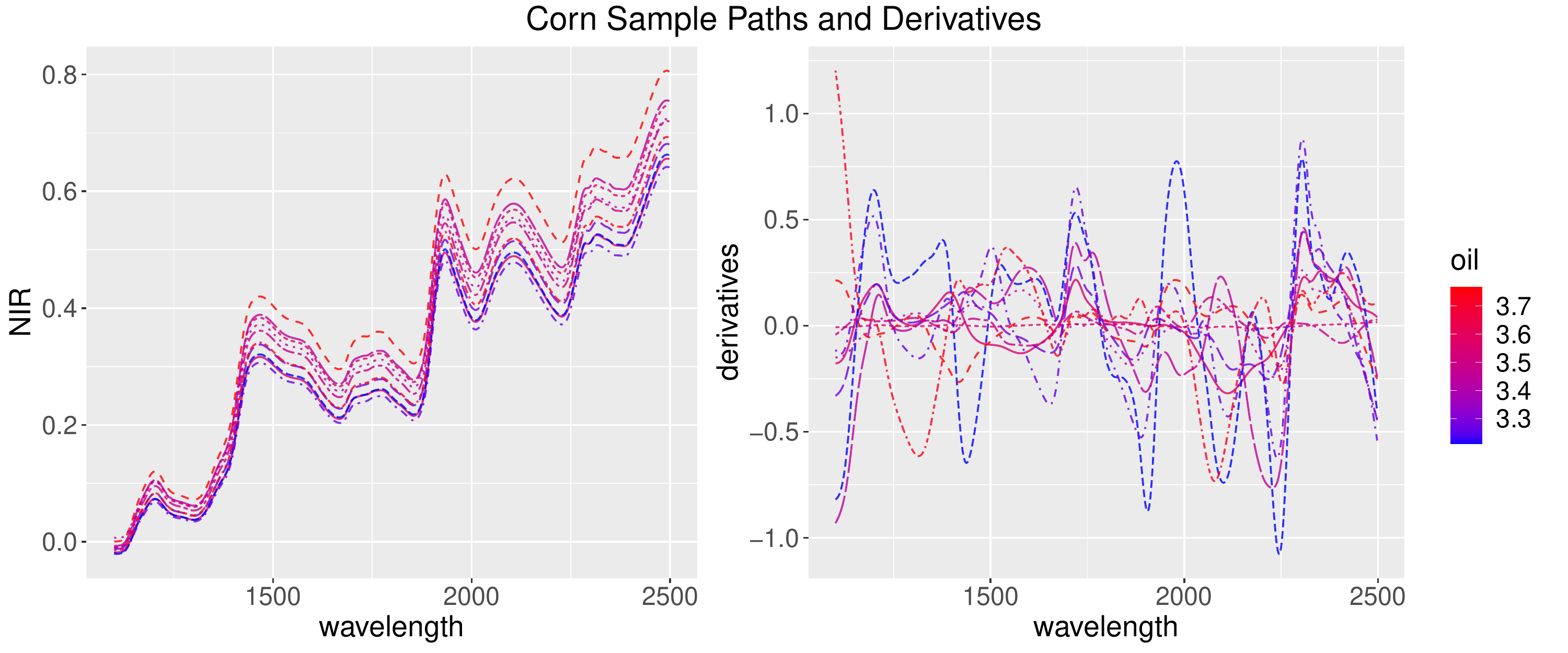}
  \caption{Plots of $10$ randomly selected corn samples with NIR paths (left) and their corresponding estimated derivatives $m'_{X_i}$ (right). A gradient color scale represents the oil content of each sample.}
  \label{fig:corn_sample}
\end{figure}

\begin{table}[h!]
    \centering
    \begin{tabular}{r|rrrr}
    \hline
         & FLLR & FLLR-r & NW & FLM\\
         \hline
       Error Ratio (ER) & 0.480 & 0.421 & 1.017 & \textbf{0.378}\\
       \hline
       Mean $k_h$ & 24.4 & 19.8 & 12.6 & N/A\\
       \hline
    \end{tabular}
    \caption{Averaged error ratios of four models on corn NIR data, for $200$ repetitions. Average numbers of nearest neighbors $k_h$ used by each method are also included.}
    \label{tab:corn_table}
\end{table}

\begin{figure}[h!] 
\centering
      \includegraphics[scale=0.525]{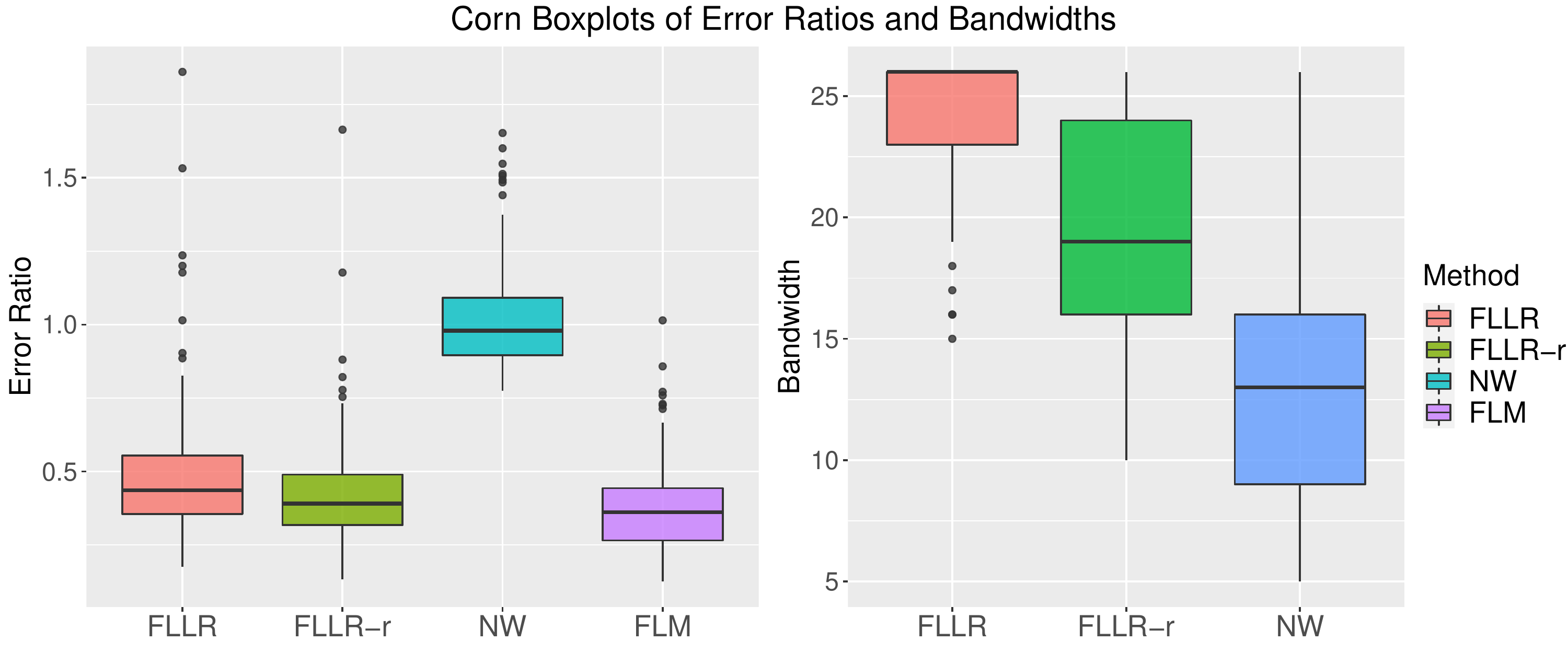}
  \caption{Boxplots of error ratios and bandwidths of the three methods for estimated regression of corn oil content on NIR.}
  \label{fig:corn_box}
\end{figure}

Table \ref{tab:corn_table} displays the averaged error ratios from $200$ repetitions by each estimator, as well as the optimal number of nearest neighbors they selected by cross validation. Fig.~\ref{fig:corn_box} displays the repetition results in boxplots. Due to the relatively higher linearity level compared to the previous example, we see that FLM performs best in this example. However, FLLR-r still generates lowest error ratios for prediction  among the three nonparametric methods, and FLLR-r uses a smaller bandwidth than FLLR.

\section{Discussion}
In this work, we extend multidimensional ridge penalization to functional regression and propose a specific type of ridge matrix that adapts to the weighted covariance of the sample scores.   We discuss in detail our tuning parameter selector, which is designed to minimize the mean squared error of predictions. Both theoretical results and data analysis show the advantages of this model (FLLR-r), including higher prediction accuracy, especially in regression with higher degree of nonlinearity and a reduction of variance. 

Estimation of functional derivatives is another important yet challenging topic in functional data analysis. Fan and Zhang (2000 \cite{fan2000two}) discussed estimating derivatives in functional linear models, and  M{\"u}ller et al.\ (2010 \cite{muller2010additive}) covered derivative estimation in functional additive models. However, there is relatively little work on nonparametric estimation of functional derivatives.
In ongoing work, we are developing further our methodology for functional derivative estimation. Potential directions include a efficient bandwidth selection for estimating derivatives and higher order functional local polynomial models for a more accurate approximation of the first-order differential operators.  We anticipate that a well-developed nonparametric derivative estimator can be applied to improve the FLLR-r model for better predictions.

\bibliographystyle{apalike}
\bibliography{references} 

\end{document}


\maketitle

\section{Additional Theoretical Results}
\subsection{Positive Semi-definiteness of $\mathbf{W}_{x, J}$}
It can be proved that the matrix $(\sum \Delta_i) \Delta-\Delta \mathbf{1} \mathbf{1}^T \Delta$ is positive semi-definite: $\forall \mathbf{u}=(u_1, \ldots, u_n)^T \in \mathbb{R}^n$,
\begin{align} \label{supeq:pd}
\mathbf{u}^T(\sum \Delta_i) \Delta \mathbf{u}-\mathbf{u}^T \Delta \mathbf{1} \mathbf{1}^T \Delta \mathbf{u} &=\displaystyle \sum_{i=1}^n \Delta_i \sum_{t=1}^n \Delta_t u_t^2-(\sum_{t=1}^n \Delta_t u_t)^2 \nonumber \\
&=\sum_{i=1}^n \sum_{t=1}^n \Delta_i \Delta_t u_t^2- \sum_{i=1}^n \sum_{t=1}^n \Delta_i \Delta_t u_i u_t \nonumber\\
&=\sum_{i=1}^n \sum_{t=1}^n \Delta_i \Delta_t u_t (u_t-u_i) \nonumber \\
\end{align}
When $t=i$, $\Delta_i \Delta_t u_t (u_t-u_i)=0$;
when $t \ne i$, $\Delta_i \Delta_t u_t (u_t-u_i) + \Delta_t \Delta_i u_i (u_i-u_t)=\Delta_i \Delta_t (u_t-u_i)^2 \ge 0$. Thus, Eq.\ref{supeq:pd} $\ge 0$, i.e. $(\sum \Delta_i) \Delta-\Delta \mathbf{1} \mathbf{1}^T \Delta$ is positive semi-definite, and so is $\mathbf{W}_{x, J}$.

\subsection{Asymptotic Bound of FLLR-r Estimator Bias}
We decompose the inverse matrix $\left(\dfrac{1}{n} \mathbf{C}_x^T \mathbf{\Delta} \mathbf{C}_x+\mathbf{H}\right)^{-1}$ to rewrite bias and variance. Let $\left(\dfrac{1}{n} \mathbf{C}_x^T \mathbf{\Delta} \mathbf{C}_x+\mathbf{H}\right)^{-1}=\begin{pmatrix}A_{1,1} & \mathbf{A}_{1,2} \\ \mathbf{A}_{1,2}^T & \mathbf{A}_{2,2} \end{pmatrix}$, where $A_{1,1}$, $\mathbf{A}_{1,2}$, $\mathbf{A}_{2,2}$ are respectively the block matrices of size $1 \times 1$, $1 \times J$ and $J \times J$. 

Proof of the asymptotic bounds of $\hat{m}(x)$ needs the following lemma
\begin{lemma} \label{suplemma:1}
For i.i.d.\ samples $X_i$, $i=1, \ldots, n$, and $n \to \infty$, 
\begin{enumerate}[i)]
    \item $\left\|\dfrac{1}{n} \sum_i \Delta_i \mathbf{c}_i\right\| = Op\left(h\right)$;
    \item $\left\|\dfrac{1}{n}\mathbf{C}_p^T \mathbf{\Delta}^2\mathbf{1}\right\| = Op\left(h\dfrac{1}{\psi_x\left(h\right)}\right)$;
    \item $\left\|\dfrac{1}{n}\mathbf{C}_p^T \mathbf{\Delta}^2\mathbf{C}_p\right\| = Op\left(h^2\dfrac{1}{\psi_x\left(h\right)}\right)$.
\end{enumerate}

\end{lemma}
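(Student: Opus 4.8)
The plan is to reduce all three bounds to two ingredients and then combine them by elementary norm inequalities and Markov's inequality. The first ingredient is a deterministic almost-sure bound on the local design vectors: there is a constant $C$ such that, for all $n$ large, $\|\mathbf{c}_i\|\,\mathbbm{1}\{\Delta_i\neq0\}\le Ch$. Indeed, since $K$ has compact support, $\Delta_i\neq0$ forces $X_i$ into the small ball of radius $h$ about $x$, and by construction each coordinate of $\mathbf{c}_i$ is a functional of $X_i-x$ vanishing at $x$ and controlled by $d(X_i,x)\le h$; with $J$ fixed this gives $\|\mathbf{c}_i\|=O(h)$ on $\{\Delta_i\neq0\}$. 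The second ingredient is that the empirical moments of the weights are $O_p$ of their means: writing $\mu_k=\mathrm{E}[\Delta_1^k]$, the standard kernel/small-ball estimates give $\mu_1=O(1)$ and $\mu_2=O\!\left(\psi_x(h)^{-1}\right)$ (both following from the boundedness of $K$, its compact support, and the definition of $\psi_x$), and since $\Delta_i^k\ge0$, Markov's inequality yields $\frac1n\sum_i\Delta_i^k=O_p(\mu_k)$ with no extra condition on $n\,\psi_x(h)$.

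Granting these, (i) follows from the triangle inequality and $\Delta_i\ge0$:
\[
\left\|\frac1n\sum_i\Delta_i\mathbf{c}_i\right\|\le\frac1n\sum_i\Delta_i\,\|\mathbf{c}_i\|\,\mathbbm{1}\{\Delta_i\neq0\}\le Ch\cdot\frac1n\sum_i\Delta_i=Ch\cdot O_p(1)=O_p(h).
\]
For (ii), $\frac1n\mathbf{C}_p^T\mathbf{\Delta}^2\mathbf{1}=\frac1n\sum_i\Delta_i^2\mathbf{c}_i$ and $\Delta_i^2$ is supported on $\{\Delta_i\neq0\}$, so the same argument with $\Delta_i$ replaced by $\Delta_i^2$ gives
\[
\left\|\frac1n\mathbf{C}_p^T\mathbf{\Delta}^2\mathbf{1}\right\|\le Ch\cdot\frac1n\sum_i\Delta_i^2=Ch\cdot O_p\!\left(\psi_x(h)^{-1}\right)=O_p\!\left(h\,\psi_x(h)^{-1}\right).
\]
For (iii), $\frac1n\mathbf{C}_p^T\mathbf{\Delta}^2\mathbf{C}_p=\frac1n\sum_i\Delta_i^2\,\mathbf{c}_i\mathbf{c}_i^T$; using the triangle inequality and the rank-one identity $\|\mathbf{c}_i\mathbf{c}_i^T\|=\|\mathbf{c}_i\|^2$,
\[
\left\|\frac1n\mathbf{C}_p^T\mathbf{\Delta}^2\mathbf{C}_p\right\|\le\frac1n\sum_i\Delta_i^2\,\|\mathbf{c}_i\|^2\,\mathbbm{1}\{\Delta_i\neq0\}\le C^2h^2\cdot\frac1n\sum_i\Delta_i^2=O_p\!\left(h^2\,\psi_x(h)^{-1}\right).
\]

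The norm inequalities and the Markov step are routine; the actual content is in the two ingredients, and the main obstacle (to the extent there is one) is the moment bound $\mu_2=O(\psi_x(h)^{-1})$, where the small-ball geometry enters. This is exactly the type of estimate proved in the auxiliary kernel lemmas --- it comes from $\mathrm{E}[K^2(d(X_1,x)/h)]\le\|K\|_\infty^2\,\mathrm{P}(d(X_1,x)\le h)$ together with the normalisation carried by $\Delta_i$ --- so I would simply cite it. The deterministic bound $\|\mathbf{c}_i\|\,\mathbbm{1}\{\Delta_i\neq0\}\le Ch$ is the other place the model structure is used, and it needs the regularity of the map defining $\mathbf{c}_i$ near $x$. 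The only mild bookkeeping point is whether $\Delta_i$ already includes the $\psi_x(h)^{-1}$ factor: this shifts $\mu_1,\mu_2$ but not the stated rates, since $\psi_x(h)\le1$ makes both $\mu_1=O(1)$ and $\mu_2=O(\psi_x(h)^{-1})$ hold in either convention.
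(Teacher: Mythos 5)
Your proposal is correct and follows essentially the same route as the paper's proof: both rest on the deterministic bound $\|\mathbf{c}_i\|\le Ch$ on the kernel support, the moment bounds $E[\Delta_i]=O(1)$ and $E[\Delta_i^2]=O(\psi_x(h)^{-1})$ obtained from the kernel and small-ball estimates, and a one-line application of Markov's inequality to a nonnegative quantity. The only cosmetic difference is that you factor out $Ch$ before applying Markov to $\frac1n\sum_i\Delta_i^k$, whereas the paper applies Markov directly to the norm after bounding its expectation; the two are equivalent.
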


\begin{proof}
\begin{enumerate}[i)]
    \item 
    The proof directly uses Markov Inequality: $E\left\|\dfrac{1}{n}\sum_i \Delta_i \mathbf{c}_i\right\| \le E \left\|\Delta_i \mathbf{c}_i\right\| \le h$, so $\forall \epsilon$, $\exists M_{\epsilon} = 1/\epsilon$ s.t.\  $P\left(\left\|\dfrac{1}{n} \sum_i \Delta_i \mathbf{c}_i\right\|/h\right) 1/\epsilon) \le \epsilon$. So $\left\|\dfrac{1}{n} \sum_i \Delta_i \mathbf{c}_i\right\|/h = Op(1)$.
    \item 
    $E\left\|\dfrac{1}{n}\sum_i \Delta_i^2 \mathbf{c}_i\right\| \le E \left\|\Delta_i^2 \mathbf{c}_i\right\| \le E \left(\Delta_i^2\right)h$. $E \left(\Delta_i^2\right) = \dfrac{EK^2\left(\left\|X_i-x\right\|/h\right)}{\left(EK\left(\left\|X_i-x\right\|/h\right)\right)^2} \le \dfrac{C_k^2}{c_k^2\psi_x\left(h\right)}$. Thus, by Markov Inequality again, $\left\|\dfrac{1}{n}\mathbf{C}_p^T \mathbf{\Delta}^2\mathbf{1}\right\| = Op\left(h\dfrac{1}{\psi_x\left(h\right)}\right)$. 
    \item $\left\|\dfrac{1}{n}\mathbf{C}_p^T \mathbf{\Delta}^2\mathbf{C}_p\right\| = \left\|\dfrac{1}{n} \sum_i \mathbf{c}_i \Delta_i^2 \mathbf{c}_i^T\right\|\le \dfrac{1}{n}\sum_i \left\|\mathbf{c}_i \Delta_i^2 \mathbf{c}_i^T\right\| = \dfrac{1}{n}\sum_i \Delta_i^2\left\|\mathbf{c}_i\right\|^2$. Then with Markov Inequality as in i) and ii), $\left\|\dfrac{1}{n}\mathbf{C}_p^T \mathbf{\Delta}^2\mathbf{C}_p\right\| =Op\left(h^2\dfrac{1}{\psi_x\left(h\right)}\right)$.

\end{enumerate}

\end{proof}

First, bias of $\hat{m}(x)$ is calculated by
\begin{equation} \label{supeq:1}
    \begin{split}
    E[\hat{m}(x)|X_1, \ldots, X_n] &= \mathbf{e}_1^T\left(\dfrac{1}{n} \mathbf{C}_x^T \mathbf{\Delta} \mathbf{C}_x+\mathbf{H}\right)^{-1}\dfrac{1}{n}\mathbf{C}_x^T\mathbf{\Delta} \begin{pmatrix} m(x) + m'_x(X_1-x)+\dfrac{1}{2} m''_{r_1}\left(\left(X_1 - x\right)^2\right)\\\vdots\\m(x)+m'_x(X_n-x)+\dfrac{1}{2}m''_{r_n}\left(\left(X_n - x\right)^2\right)\end{pmatrix} \\
    &= \mathbf{e}_1^T\left(\dfrac{1}{n} \mathbf{C}_x^T \mathbf{\Delta} \mathbf{C}_x+\mathbf{H}\right)^{-1}\left(\dfrac{1}{n} \mathbf{C}_x^T\mathbf{\Delta} \mathbf{C}_x + \mathbf{H}\right) \left[m(x), \mathbf{m'_{x, J}}^T\right]^T \\
    &- \mathbf{e}_1^T\left(\dfrac{1}{n} \mathbf{C}_x^T \mathbf{\Delta} \mathbf{C}_x+\mathbf{H}\right)^{-1} \mathbf{H}\left[m(x), \mathbf{m'_{x, J}}^T\right]^T \\
    &+ \mathbf{e}_1^T\left(\dfrac{1}{n} \mathbf{C}_x^T \mathbf{\Delta} \mathbf{C}_x+\mathbf{H}\right)^{-1}\dfrac{1}{n} \mathbf{C}_x^T\mathbf{\Delta} \begin{pmatrix} \mathcal{P_{{S_J}^\perp}} m'_x (X_1 -x) \\ \vdots \\ \mathcal{P_{{S_J}^\perp}} m'_x (X_n -x) \end{pmatrix}\\
    &+ \mathbf{e}_1^T\left(\dfrac{1}{n} \mathbf{C}_x^T \mathbf{\Delta} \mathbf{C}_x+\mathbf{H}\right)^{-1}\dfrac{1}{n} \mathbf{C}_x^T\mathbf{\Delta} \begin{pmatrix} \dfrac{1}{2}m''_{r_1}\left(\left(X_1 - x\right)^2\right) \\ \vdots \\ \dfrac{1}{2}m''_{r_n}\left(\left(X_n - x\right)^2\right) \end{pmatrix}.
    \end{split}
\end{equation}

Apparently second line of Eq.(\ref{supeq:1}) is $m(x)$. So bias of $\hat{m}(x)$ is the sum of the last three lines of Eq.(\ref{supeq:1}), which we denote respectively as $T_0$, $T_1$ and $T_2$. 

Asymptotic bound of $T_0$ is
\begin{align}
    T_0 &= - \mathbf{e}_1^T\left(\dfrac{1}{n} \mathbf{C}_x^T \mathbf{\Delta} \mathbf{C}_x+\mathbf{H}\right)^{-1} \mathbf{H}\left[m(x), {\mathbf{m}'_{x, J}}^T\right]^T\nonumber \\
    &= - \left[ A_{11} \quad \mathbf{A}_{12}\right]\begin{bmatrix} 0 \\\mathbf{H}^*\mathbf{m}'_{x,J}\end{bmatrix} = - \mathbf{A}_{12} \mathbf{H}^* \mathbf{m}'_{x,J} \nonumber\\
    &= a\dfrac{1}{n} \mathbf{1}^T \mathbf{\Delta} \mathbf{C_p} \mathbf{V} \mathbf{D}^{-1}\mathbf{V}^T \mathbf{H}^*\mathbf{m}'_{x,J} \nonumber\\
    &= a\dfrac{1}{n} \mathbf{1}^T \mathbf{\Delta} \mathbf{C_p} \mathbf{V} \mathbf{D}^{-1} \mathbf{\Lambda}\mathbf{V}^T \mathbf{m}'_{x,J} \nonumber\\
    & \le a \left\|\dfrac{1}{n}\mathbf{1}^T \mathbf{\Delta} \mathbf{C_p} \mathbf{V}\right\| \cdot \left\|\mathbf{D}^{-1} \mathbf{\Lambda}\mathbf{V}^T \mathbf{m}'_{x,J}\right\| \nonumber\\
    & \le a \left\|\dfrac{1}{n}\sum_i \Delta_i \mathbf{c_i}\right\| \cdot \max_{1 \le j \le J}\dfrac{\lambda_j}{\tilde{\gamma}_j+\lambda_j} \left\|\mathbf{m}'_{x,J}\right\| \nonumber\\
    &= Op(\|\mathcal{P_{{S_J}}}m'_x\|h),
\end{align}
     where $a^{-1} = 1 + op\left((n\psi_x(h))^{-1/2}\right)$ by Ba\'illo and Gran\'e (2009 \cite{baillo2009})

Similarly, the bound of $T_1$ is:
\begin{align*}
    T_1 &= \left[A_{11} \quad \mathbf{A_{12}}\right] \dfrac{1}{n} \mathbf{C}_x^T \mathbf{\Delta} \begin{pmatrix} \mathcal{P_{{S_J}^\perp}} m'_x (X_1 -x) \\ \vdots \\ \mathcal{P_{{S_J}^\perp}} m'_x (X_n -x) \end{pmatrix} \nonumber\\
    &= \left[A_{11} \quad \mathbf{A}_{12}\right] \begin{bmatrix} \dfrac{1}{n}\sum_i \Delta_i \mathcal{P_{{S_J}^\perp}} m'_x (X_i -x) \\ \dfrac{1}{n} \sum_i \Delta_i \mathcal{P_{{S_J}^\perp}} m'_x (X_1 -x) \mathbf{c}_i \end{bmatrix}
\end{align*}
Based on Lemma \ref{suplemma:1}, $\dfrac{1}{n}\sum_i \Delta_i \mathcal{P_{{S_J}^\perp}} m'_x (X_i -x) = Op\left(\left\|\mathcal{P_{{S_J}^\perp}} m'_x\right\| h\right)$, and $\left\|\dfrac{1}{n} \sum_i \Delta_i \mathcal{P_{{S_J}^\perp}} m'_x (X_1 -x) \mathbf{c}_i\right\| = Op\left(\left\|\mathcal{P_{{S_J}^\perp}} m'_x\right\| h^2\right)$. Thus, 
\begin{align} \label{supeq:T1}
    T_1 &= a \cdot Op\left(\left\|\mathcal{P_{{S_J}^\perp}} m'_x\right\| h\right) + a^2\cdot \max_{1\le j \le J} \dfrac{1}{\tilde{\gamma}_j+\lambda_j}Op\left(\left\|\mathcal{P_{{S_J}^\perp}} m'_x\right\| h^3\right) \nonumber\\
    &= Op\left(\left\|\mathcal{P_{{S_J}^\perp}} m'_x\right\| h\right) + \kappa_J Op\left(\left\|\mathcal{P_{{S_J}^\perp}} m'_x\right\| h^3\right)
\end{align}
And $T_2$ is consequently 
\begin{align} \label{supeq:T2}
    T_2 &= \left[A_{11} \quad \mathbf{A}_{12}\right] \begin{bmatrix} \dfrac{1}{n} \sum_i \dfrac{1}{2}m''_{r_i}\left(\left(X_i - x\right)^2\right) \\ \dfrac{1}{n} \sum_i \dfrac{1}{2}m''_{r_i}\left(\left(X_i - x\right)^2\right) \Delta_i \mathbf{c}_i\end{bmatrix} \nonumber\\
    &= Op\left(h^2\right) + \kappa_J Op\left(h^4\right)
\end{align}

\subsection{FLLR-r Estimator Variance}
Variance of $\hat{m}\left(x\right)$ is
\begin{align} \label{supeq:var}
    \text{Var}\left(\hat{m}\left(x\right)\right) &= \dfrac{\sigma_e^2}{n} \mathbf{e}_1^T\left(\dfrac{1}{n} \mathbf{C}_x^T \mathbf{\Delta} \mathbf{C}_x+\mathbf{H}\right)^{-1}\dfrac{1}{n}\mathbf{C}_x^T\mathbf{\Delta}^2\mathbf{C}_x\left(\dfrac{1}{n} \mathbf{C}_x^T \mathbf{\Delta} \mathbf{C}_x+\mathbf{H}\right)^{-1}\mathbf{e}_1 \nonumber\\
    &= \dfrac{\sigma_e^2}{n}\left[A_{11} \quad \mathbf{A}_{12}\right]\begin{bmatrix} \dfrac{1}{n}\sum_i \Delta_i^2 & \dfrac{1}{n}\sum_i \Delta_i^2 \mathbf{c}_i^T\\ \dfrac{1}{n}\sum_i \Delta_i^2 \mathbf{c}_i & \dfrac{1}{n}\mathbf{C}_p^T\mathbf{\Delta}^2\mathbf{C}_p \end{bmatrix} \begin{bmatrix}A_{11} \\ \mathbf{A}_{21}\end{bmatrix} \nonumber\\
    &= \dfrac{\sigma_e^2}{n}\left[\left(\dfrac{1}{n}\sum_i \Delta_i^2\right)A_{11}^2 + 2A_{11}\left(\dfrac{1}{n}\sum_i \Delta_i^2 \mathbf{c}_i^T\right)\mathbf{A}_{21} + \dfrac{1}{n} \mathbf{A}_{12} \mathbf{C}_p^T\mathbf{\Delta}^2\mathbf{C}_p\mathbf{A}_{21}\right]
\end{align}

\begin{itemize}
    \item Bound of $A_{11}$ is:
    \begin{align*}
        A_{11} &= a + a^2 \dfrac{1}{n} \mathbf{1}^T\mathbf{\Delta}\mathbf{C}_p \mathbf{V}\mathbf{D}^{-1}\mathbf{V}^T\dfrac{1}{n}\mathbf{C}_p^T\mathbf{\Delta}\mathbf{1}\\
        &\le a + a^2 \left\|\dfrac{1}{n} \mathbf{1}^T\mathbf{\Delta}\mathbf{C}_p \mathbf{V}\right\|^2 \left\|\mathbf{D}^{-1}\right\| \\
        &= a + a^2 Op\left(h^2\right) \max_{1 \le j \le J}\dfrac{1}{\tilde{\gamma}_j +\lambda_j}
    \end{align*}
    \item
    Bound of the norm of $\mathbf{A}_{12}$ is:
    \begin{align*}
        \left\|\mathbf{A}_{12}\right\| &= a\left\|\dfrac{1}{n}\mathbf{1}^T\mathbf{\Delta}\mathbf{C}_p \mathbf{V}\mathbf{D}^{-1}\mathbf{V}^T\right\|\\
        &\le a\left\|\dfrac{1}{n}\mathbf{1}^T\mathbf{\Delta}\mathbf{C}_p \mathbf{V}\right\| \left\|\mathbf{D}^{-1}\right\|\\
        &=a\cdot \max_j\dfrac{1}{\tilde{\gamma}_j+\lambda_j} Op\left(h\right).
    \end{align*}
\end{itemize}

Then together with Lemma \ref{suplemma:1}, variance of $\hat{m}(x)$ can be bounded by
\begin{align}
    \text{Var}\left(\hat{m}\left(x\right)\right) &= \dfrac{\sigma_e^2}{n}\left[\left(\dfrac{1}{n}\sum_i \Delta_i^2\right)A_{11}^2 + 2A_{11}\left(\dfrac{1}{n}\sum_i \Delta_i^2 \mathbf{c}_i^T\right)\mathbf{A}_{21} +  \mathbf{A}_{12} \dfrac{1}{n}\mathbf{C}_p^T\mathbf{\Delta}^2\mathbf{C}_p\mathbf{A}_{21}\right] \nonumber\\
    &= \dfrac{\sigma_e^2}{n}\left\{Op\left(\psi_x^{-1}(h)\right) + \kappa_J Op\left(\dfrac{h^2}{\psi_x(h)}\right) + \kappa_J^2 Op\left(\dfrac{h^4}{\psi_x(h)}\right)\right\} \nonumber\\
    &= Op\left(\dfrac{1}{n\psi_x(h)}\right) +\kappa_J  Op\left(\dfrac{h^2}{n\psi_x(h)}\right) + \kappa_J^2 Op\left(\dfrac{h^4}{n\psi_x(h)}\right)
\end{align}

Finally, with $h^2\max_j\dfrac{1}{\tilde{\gamma}_j+\lambda_j} = Op(1)$ as stated in assumption, bias of $\hat{m}(x)$ is $Op(\|\mathcal{P_{{S_J}}}m'_x\|h) + Op\left(\left\|\mathcal{P_{{S_J}^\perp}} m'_x\right\| h\right) + Op\left(h^2\right)$, and its variance is $Op\left(\dfrac{1}{n\psi_x(h)}\right)$.

\section{Derivative Estimation}
By Riesz representation, $m'_x(u)= \langle \sum_{j=1}^{\infty} \langle m'_x, \phi_{j} \rangle \phi_j, u\rangle$ for $u \in \mathcal{L}^2(\mathcal{T})$. To examine behaviors of derivative estimation by FLLR and FLLR-r, we first calculate the true derivatives $m'_{X_i}$ for training cases $X_i$, where $\langle m'_{X_i}, \phi_{j} \rangle = m'_{X_i}\left(\phi_{j}\right)$. For regression described in Eq.(12) of the main draft, derivative at basis index $j^*$ is:

\begin{equation*}
    m'_{X_i}\left(\phi_{j^*}\right) = 
    \begin{cases}
    \left(1-a\right) + a \left[-2\sqrt{\theta_{j^*}}U_{ij^*}\exp \left(-\theta_{j^*}U_{ij^*}^2\right)\right], \text{\qquad when } j^* \le 20,\\
    \left(1-a\right), \text{\qquad when } 20 < j^* \le 30, \\
    0, \text{\qquad when } j^* > 30.
    \end{cases}
\end{equation*}
Then the true derivative operator on cut-off $J^*$ basis, with $J^*$ selected by cross validation, is represented as $m'_{X_i, J^*} = \sum_{j=1}^{J^*} m'_{X_i}\left(\phi_{j}\right) \phi_j$. FLLR and FLLR-r estimated derivatives take the form $\hat{m}'_{X_i, J^*} = \sum_{j=1}^{J^*} \hat{m}'_{X_i}\left(\hat{\phi}_{j}\right) \hat{\phi}_j$, where $\hat{\phi}_j$ is estimated basis function from local-linear pre-smoothed raw data curves, and $\hat{m}'_{X_i}\left(\hat{\phi}_{j}\right)$ is derived by each method following implementation discussed in Section 4.2.  

\begin{figure}[h!] 
\centering
      \includegraphics[scale=0.37]{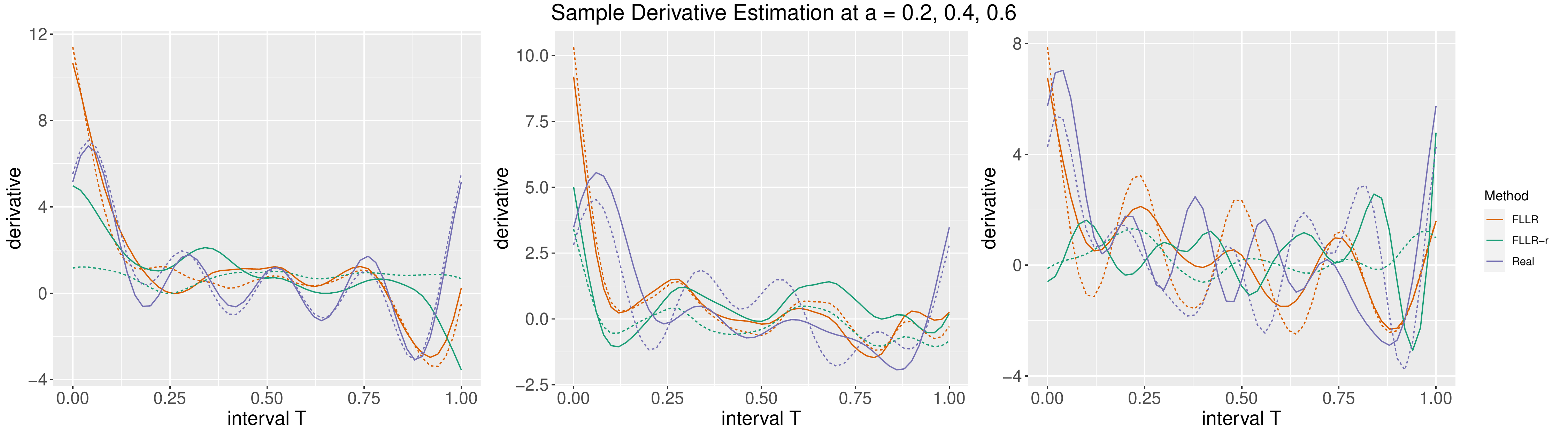}
  \caption{Estimated derivatives and their true counterparts at $a=0.2, 0.4, 0.6$. For each $a$, two samples are randomly selected, labeled by solid and dashed lines. FLLR and FLLR-r estimated derivatives and the true derivative at truncated basis $J^*$ are plotted for each sample, with types differentiated by color}
  \label{fig:est_deriv}
\end{figure}

Fig.\ref{fig:est_deriv} includes estimated derivatives and their true counterparts at $a=0.2, 0.4, 0.6$, to show the change in derivative estimation behaviors of FLLR and FLLR-r. The same generated sample data are used as in Section 4.3. For each $a$, two samples are randomly selected, labeled by solid and dashed lines. Then FLLR and FLLR-r estimated derivatives and the true derivative at truncated basis $J^*$ are plotted for each sample, with different derivative types differentiated by color. Considering the display quality, we use only two samples for each $a$ level, but the estimator trend is still apparent: when $a$ is small, FLLR and FLLR-r are able to approximate derivatives. However, with $a$ increasing, both estimators tend to underfit the derivatives as the regression operator becomes more nonlinear. Derivatives by FLLR-r in such situation are generally flatter than those by FLLR, due to the effect of ridge penalty. Table \ref{deriverr} records the averaged squared error from the two randomly selected samples by each derivative estimator, under different $a$ levels.

\begin{table}[h!]
\centering
\begin{tabular}{r|r|r}
  \hline
 & err(FLLR) & err(FLLR-r) \\ 
  \hline
$a = 0.2$ & 3.29 & 5.11 \\ 
\hline

$a = 0.4$ & 3.26 & 3.94 \\ 
   \hline
$a = 0.6$ & 3.58 & 6.01\\
\hline
\end{tabular}
\caption{Averaged Squared error of the estimated derivatives from the true derivative functions at different $a$ levels by FLLR and FLLR-r.}
\label{deriverr}
\end{table}

\bibliographystyle{apalike}
\bibliography{references} 